\newtheorem{theorem}{Theorem}
\newcommand{\RR}{{\mathbb R}}
\newcommand{\NN}{{\mathbb N}}
\newcommand{\fhat}{\hat{f}}
\newcommand{\ghat}{\hat{g}}
\newcommand{\kb}{{\boldsymbol{k}}}
\newcommand{\mb}{{\boldsymbol{m}}}
\newcommand{\qb}{{\boldsymbol{q}}}
\newcommand{\qbh}{\hat{\boldsymbol{q}}}
\newcommand{\lb}{{\boldsymbol{\ell}}}
\newcommand{\xb}{{\boldsymbol{x}}}
\newcommand{\yb}{{\boldsymbol{y}}}
\begin{document}
\baselineskip=16pt

\title{Probability Distributions for Space and Time Averaged Quantum Stress Tensors }

\author{Christopher J. Fewster}
\email{chris.fewster@york.ac.uk}
\affiliation{Department of Mathematics, University of York, 
Heslington, York YO10 5DD,
United Kingdom}

\author{L. H. Ford}
\email{ford@cosmos.phy.tufts.edu}
\affiliation{Institute of Cosmology, Department of Physics and
  Astronomy, 
Tufts University, Medford, Massachusetts 02155, USA}

\begin{abstract}
We extend previous work on quantum stress tensor operators which have been averaged over finite time intervals to include averaging
over finite regions of space as well. The space and time averaging can be viewed as describing a measurement process for a stress
tensor component, such as the energy density of a quantized field in its vacuum state. Although spatial averaging reduces the probability
of large vacuum fluctuations compared to time averaging alone, we find that the probability distribution decreases more slowly than exponentially
as the magnitude of the measured energy density increases. This implies that vacuum fluctuations can sometimes dominate over thermal
fluctuations and potentially have observable effects.
\end{abstract}

\maketitle

\section{Introduction}
\label{sec:intro}

Although the vacuum state of a quantum field theory is an eigenstate of the Hamiltonian, the integral of the energy density over all space, it 
is not an eigenstate of the local energy density or of other components of the stress tensor. This implies the existence of vacuum fluctuations 
of the energy density and other quadratic operators. For these fluctuations to be finite, and hence physically meaningful, these operators
must be averaged over a finite spacetime region. We can view the averaging process as representing the outcome of a measurement of the
operator. The energy density at a single spacetime point is not measurable, and hence not meaningful. However, the spacetime average is
meaningful, and will have finite fluctuations described by a probability distribution. 

The study of the probability distributions for quantum stress tensors was begun in Ref.~\cite{FewsterFordRoman:2010} for conformal field theory (CFT) in
two spacetime dimensions, and continued in Refs.~\cite{FFR2012}  and \cite{FF2015} for quantum fields in flat four dimensional spacetime. 
Further results on CFT appear in~\cite{Fe&Ho18}. Let $x$
denote a dimensionless measure of the averaged stress tensor component $T$. If $\tau$ is a measure of the size of the sampling region, then in
units where $\hbar=c=1$, we may take $x= \tau^d\, T$, where $d$ is the dimension of the spacetime. Let $P(x)$ denote a probability distribution
so that $P(x)\, dx$ is the probability in a measurement of finding an outcome in the interval $[x,x+dx]$. There are two key features of $P(x)$ for 
a quadratic operator, such as the energy density, which
have emerged in the papers just cited: 1) There is a negative lower bound on the region where $P(x) \not=0$ if $T \geq 0$ at the classical level, and
2)  $P(x)$ can fall more slowly than exponentially, leading to an enhanced probability for large positive fluctuations relative to thermal fluctuations.
By contrast, the probability distribution for the spacetime average of a linear operator, such as the electric field, is a Gaussian function.

If $T$ is a non-negative quantity in classical physics, such as the energy density, its quantization typically admits quantum states for which its expectation value 
is below the vacuum value. In particular, if the vacuum expectation value vanishes there exist states for which
its expectation value is negative, $\langle T  \rangle < 0$, so regions where the mean energy density is negative become possible. At least in some models, these regions are 
constrained by quantum inequalities of the form $\langle  \tau^d\,T  \rangle \geq - x_0$, where $x_0 >0$ is a dimensionless number of the order of 
or somewhat less than unity. For a recent review see~\cite{F2017}. If the quantum inequality gives the optimal lower bound on expectation values, 
then  $P(x) = 0$ if $x < -x_0$. This means
that $-x_0$ is the lowest eigenvalue of the averaged operator $T$, and is hence both the lower bound on expectation values, and the smallest possible
outcome of a measurement in any state.

For the energy density (at least for the averages considered to date) the tail of  $P(x)$ for $x \gg 1$ was found to fall as an exponential in two spacetime 
dimensions~\cite{FewsterFordRoman:2010,Fe&Ho18}, 
but more slowly in four dimensions~\cite{FFR2012,FF2015}.  Specifically, $P(x) \sim c_0\, x^b \, {\rm e}^{-a x^c}$ for some constants 
$c_0, b, a, c$, of which $c$ is the most crucial.  For stress tensor operators averaged in time with a Lorentzian function, it was found in  Ref.~\cite{FFR2012}
that $c=1/3$. This implies that the distribution is highly skewed and so fluctuations which are several orders of magnitude larger than the standard deviation 
can have a non-negligible probability
of occurring. This is a result which would not be possible in random processes where measurements at different moments in time are uncorrelated, in which case 
the central limit theorem would give a Gaussian probability distribution. By contrast our results reflect the highly correlated nature of quantum vacuum fluctuations. 

Although a Lorentzian function of time is a useful model, it suffers from the defect that it describes a measurement which began in the infinite past and
is only completed in the infinite future. A more realistic description involves smooth (infinitely differentiable) functions which have compact support, that is,
are zero outside of a finite interval. The  probability distributions for quantum stress tensors measured in a finite interval with such functions was studied
in Ref.~\cite{FF2015}. A class of compactly supported functions was treated, whose Fourier transforms fall as ${\rm e}^{-\gamma|\omega|^\alpha}$, where
$0< \alpha <1$ and $\gamma>0$, as $|\omega|\rightarrow \infty$. It was argued that such functions could arise in physical situations, as illustrated by a 
simple electrical circuit whose switch-on corresponds to $\alpha = 1/2$. For this class of functions, it was
shown that the tail of the probability distribution now decays with $c = \alpha/3$. Thus if, for example, a measurement of the energy density in the 
vacuum state of the electromagnetic field is described by the $\alpha = 1/2$ function, then the probability of finding a very large energy density associated with
$x \gg 1$ will be roughly proportional to ${\rm e}^{-a x^{1/6}}$. 

The previous results on stress tensor probability distributions~\cite{FewsterFordRoman:2010,FFR2012,FF2015} were obtained either from a moment generating 
function~\cite{FewsterFordRoman:2010}, or by asymptotic calculation of high moments~\cite{FFR2012,FF2015}. In four dimensions, the moments
approach suffers from the ambiguity that the moments do not necessarily uniquely determine $P(x)$. The Hamburger moment theorem~\cite{Simon} guarantees 
that $P(x)$ is uniquely determined by the moments of the operator provided that the $n$-moment grows no faster than $n!D^n$ as $n \rightarrow \infty$, for some constant $D$. 
However, the moments of stress tensor operators averaged with the compactly supported functions of time discussed in Ref.~\cite{FF2015} grow as
$(3 n/\alpha)!$. The non-compactly supported Lorentzian function used in Ref.~\cite{FFR2012} formally corresponds to the $\alpha = 1$ case, and
leads to moments with $(3 n)!$ growth. In all of these cases, $P(x)$ may not be uniquely determined from the moments. In general, when the moments 
grow too rapidly to ensure uniqueness,
there can be several distinct choices for $P(x)$ which all produce the same moments, and differ from one another by an oscillatory function of $x$.
Even if $P(x)$ is not uniquely determined, its integrals over a finite interval tend to cancel the oscillations and can give a reliable estimate of the probability of
a result in this interval. For example, in some applications one is interested in the probability of a fluctuation which exceeds a given threshold and is
given by the complementary cumulative distribution, $P_>(x) = \int_x^\infty P(y) \, dy$, and it is possible to extract bounds on this function from the moment sequence in some cases, even if the moment sequence does not determine the probability distribution uniquely~\cite{FFR2012}.

There is also an independent approach to finding $P(x)$ which does not use the moments, which is direct diagonalization of the averaged operator
$T$ by a Bogoliubov transformation to find its eigenvalues and eigenstates. The probability of finding a given eigenvalue in a measurement on the
original vacuum state is then the squared overlap of the eigenstate with the vacuum. In practice, this approach must be performed numerically on a
system with a finite number of degrees of freedom. This was done in Ref.~\cite{SFF18} for a massless scalar field in a spherical cavity including about
one hundred modes for time sampling associated with several values of $\alpha$. The results are in reasonable agreement with those found for the
tail of $P(x)$ in Refs.~\cite{FFR2012,FF2015}. This lends support to the conclusion in the latter references that fluctuations several orders of magnitude
larger than the the typical  fluctuation can have a non-negligible probability of occurrence.

Such large fluctuations may have potentially observable effects. For example, the role of large radiation pressure fluctuations in enhancing the barrier
penetration by charged particles was treated in Ref.~\cite{Huang:2016kmx}, where it was argued that these fluctuations have the potential in some
circumstances to increase the barrier penetration rate by several orders of magnitude compare to the rate predicted by the usual quantum tunneling
process. It was further suggested that this effect may have already been observed in the nuclear fusion of heavy ions with heavy nuclei. By contrast, the  
vacuum fluctuations of the linear electric field, which obey a Gaussian probability distribution, cause only a modest increase in penetration 
rates~\cite{FZ99,Huang:2015lea}.  Quantum stress tensor fluctuations are also of interest in gravity theory, as they can drive passive fluctuations of
the gravitational field, which is a variety of quantum gravity effect. Stress tensor fluctuations in the early universe could play a role in the creation of
primordial density perturbations~\cite{WKF07,Ford:2010wd} or tensor perturbations~\cite{Wu:2011gk}. The references just cited all deal with integrals 
of the stress tensor correlation function, and hence the variance of the stress tensor fluctuations. It will be of interest to study the probability of large
fluctuations in these and other gravitational applications.  One possible application is to the effects of vacuum fluctuations on the small scale causal structure
of spacetime. In two-dimensional models, it has been found that large positive fluctuations can cause focussing of geodesics,
and closure of lightcones on small scales~\cite{CMP11,CMP18}.

Most of the previous work on the probability of  quantum stress tensors fluctuations was restricted to operators averaged in time at one spatial point. The 
purpose of the  present paper is to extend this treatment to include the effects of averaging in space as well. The outline of the paper is as follows:
In Sec.~\ref{sec:CFT}, we discuss stress tensor probability distributions in two spacetime dimensions, particularly in conformal field theory where exact
results are possible. Space and time averaging of stress tensor operators in four-dimensional Minkowski spacetime is developed in Sec.~\ref{sec:spatial-ave},
and the sampling functions needed for this averaging are discussed.
An iteration procedure for the calculation of the moments of the averaged operators is introduced. This procedure is analyzed in detail in 
Sec.~\ref{sec:iteration}. It is argued that if the spatial averaging scale is smaller than the temporal scale, then the lower moments are sensitive only to
the time averaging, but the high moments will also depend upon spatial averaging. The implications of these results for the rate of growth of the
moments is treated in Sec.~\ref{sec:growth}. It is found that the initial growth rate can be the  $(3 n/\alpha)!$ behavior found in Ref.~\cite{FF2015}
with time averaging alone. However, for larger $n$, there is a transition to a somewhat lower growth rate of $(n/\alpha)!$. This is still too fast to satisfy the 
Hamburger criterion, but our results suggest that a weaker criterion due to Stieltjes holds for $1/2\le\alpha<1$, implying that the moments uniquely determine 
the probability distribution among those that vanish on a half-line. The implications of these
results for the tail of the  probability distribution are discussed in Sec.~\ref{sec:tail}, where it is shown that the asymptotic form of $P(x)$ now falls
more rapidly than in the worldline case, but still more slowly than an exponential function. This reflects that fact that spatial averaging somewhat reduces
the probability of large fluctuations, but this probability remains high enough to have important physical effects. The latter point is discussed in more detail
in the final section, Sec.~\ref{sec:final}, where the key results of the paper are summarized and discussed. Appendix~A contains an
explicit construction of specific forms of the temporal and spatial sampling functions. Appendix~B discusses some results on the
asymptotic forms of integrals which are used in Sec.~\ref{sec:growth}.

Units in which $\hbar = c =1$ are used throughout the paper.

\section{Exact results in 2-dimensional conformal field theory}
\label{sec:CFT}

Two-dimensional conformal field theory (CFT) provides an interesting example, in which the relative effects of time and space averaging can be determined in detail. 
Recall that the energy density of a CFT in $1+1$-dimensions splits into mutually commuting left- and right-moving components
\begin{equation}
T_{00}(t,x) = T_R(u) + T_L(v),
\end{equation}
where we assume flat spacetime and let $u=t-x$, $v=t+x$. Any spacetime average of the energy density can be written in terms of these components as
\begin{equation}
\int dx\,dt\, T_{00}(t,x)f(t,x) = \int dv\, T_L(v) F_L(v) + \int du\, T_R(u) F_R(u) ,
\end{equation}
where
\begin{align*}
F_L(v) &= \frac{1}{2}\int_{-\infty}^\infty du\, f\left(\frac{u+v}{2},\frac{v-u}{2}\right) \\
F_R(u) &= \frac{1}{2}\int_{-\infty}^\infty dv\, f\left(\frac{u+v}{2},\frac{v-u}{2}\right).
\end{align*}
Here, the leading factor of $1/2$ is a Jacobian determinant. Now let $P_L$ be the probability density function for measurements of $T_L$, averaged against $F_L$, in the vacuum state, i.e.,
\begin{equation}
\int_{\omega_1}^{\omega_2} d\omega\,P_L(\omega) = \text{Prob} \Bigl(T_L(F_L)\in[\omega_1,\omega_2] \Bigr)
\end{equation}
and write $P_R$ and $P$ for the analogous probability density functions of $T_R$ (averaged against $F_R$) and $T_{00}$ (averaged against $f$). As $T_L$ and $T_R$ commute, 
the probability distributions are independent and the combined probability distribution is obtained as their convolution,
\begin{equation}
P(\lambda) = \int_{-\infty}^\infty d\lambda' P_L(\lambda-\lambda')P_R(\lambda').
\end{equation}
The probability distribution of these components of the energy tensor can be determined -- at least in principle -- either by a moment generating function 
method~\cite{FewsterFordRoman:2010} or by conformal welding techniques~\cite{Fe&Ho18}. The latter method can be applied to the cases of 
the vacuum and certain other special states, including thermal equilibrium states and also highest weight states~\cite{Fe&Ho18}. Each method rests on the solution to certain subsidiary problems and closed form results 
are only available in particular cases~\cite{FewsterFordRoman:2010,Fe&Ho18,AF19}, though the method of~\cite{Fe&Ho18} is also amenable to numerical treatment. 

Here, we draw attention to a special case where the probability distribution can be determined in closed form for different spatial and temporal averaging scales. Let 
\begin{equation}
f(t,x)= (2\pi \ell \tau)^{-1} e^{-\frac{1}{2}(t^2/\tau^2+x^2/\ell^2)},
\end{equation}
that is, a product of Gaussians in space and time, normalized to have unit integral over spacetime, in which $\ell$ and $\tau$  determine the spatial and temporal averaging scales.
In this case, a simple calculation gives
\begin{equation}
F_L(u) = \frac{e^{-u^2/(2\sigma^2)}}{\sqrt{2\pi\sigma^2}},
\end{equation} 
which is also a normalized Gaussian with characteristic width $\sigma=\sqrt{\ell^2+\tau^2}$. 
It is easily seen that $F_R(v)=F_L(v)$. For any unitary positive energy CFT, the probability distribution of $T_L(F_L)$ in the vacuum state is known in closed form~\cite{FewsterFordRoman:2010} 
(see~\cite{Fe&Ho18,AF19} for some other closed form expressions) and is given by 
the shifted Gamma distribution
\begin{equation}
P_L(\omega) = \vartheta(\omega+\omega_0)\frac{(2\pi\sigma^2)^{c/24}}{\Gamma(c/24)}(\omega+\omega_0)^{c/24-1}e^{-2\pi\sigma^2(\omega+\omega_0)},
\end{equation}
where $c$ is the central charge of the CFT [e.g., $c=1$ for a massless scalar field], $\omega_0=c/(48\pi\sigma^2)$ and $\vartheta$ is a Heaviside function. 
As $P_L$ and $P_R$ are 
identical, the overall probability distribution is the convolution of $P_L$ with itself and is again a shifted Gamma distribution 
\begin{equation}\label{eq:rho}
P(\omega)= \vartheta(\omega+2\omega_0)\frac{(2\pi\sigma^2)^{c/12}}{\Gamma(c/12)}(\omega+2\omega_0)^{c/12-1}e^{-2\pi\sigma^2(\omega+2\omega_0)}.
\end{equation}
To see this, it is easiest to proceed from the moment generating function 
\begin{equation}
M_L(\mu) =\int_{-\infty}^\infty d\mu\, P_L(\omega) e^{\mu\omega} = \left[\frac{e^{-\mu/(2\pi\sigma^2)}}{1-\mu/(2\pi\sigma^2)}\right]^{c/24}
\end{equation}
for $P_L$ (defined for $\mu<2\pi\sigma^2$) and note that the moment generating function for $P$ must be 
\begin{equation}\label{eq:CFTMGF}
M(\mu) = M_L(\mu)^2 =  \left[\frac{e^{-\mu/(2\pi\sigma^2)}}{1-\mu/(2\pi\sigma^2)}\right]^{c/12}.
\end{equation}
Therefore the probability density function for $P$ is just that of $P_L$ but with $c$ replaced by $2c$ throughout. 

We may read off a sharp quantum inequality bound on the averaged energy density from \eqref{eq:rho}, namely
\begin{equation}\label{eq:sharpQEI}
\int dt\,dx\, \langle T_{00}(t,x)\rangle_\psi f(t,x) \ge -\frac{c}{24\pi(\ell^2+\tau^2)}
\end{equation}
for any physically reasonable state $\psi$. This inequality may also be obtained as a special case of a general quantum inequality bound proved by different methods in~\cite{Fe&Ho05}, in which a precise specification of the relevant states may be found. It is interesting to compare this bound with the worldline bound 
\begin{equation}
\frac{1}{\tau\sqrt{2\pi}} \int dt\,    e^{-t^2/(2\tau^2)}\langle T_{00}(t,x)\rangle_\psi  \ge -\frac{c}{24\pi \tau^2}
\end{equation}
obtained in~\cite{FewsterFordRoman:2010,Fe&Ho05} for Gaussian smearing on timescale $\tau$. If one attempted to derive a spacetime bound by simply averaging all these bounds in 
$x$ with the appropriate Gaussian weight, one would obtain a (non-sharp) bound 
\begin{equation}\label{eq:nonsharpQEI}
\int dt\,dx\, \langle T_{00}(t,x)\rangle_\psi f(t,x) \ge -\frac{c}{24\pi \tau^2}.
\end{equation}
As one might expect, the sharp bound  \eqref{eq:sharpQEI} improves on this for all $\ell >0$, and becomes progressively tighter as $\ell$ increases. 
In the limit $\ell\to\infty$, we see that 
the sharp lower bound in \eqref{eq:sharpQEI} vanishes, which is to be expected as the Hamiltonian is a positive operator. Similarly, the probability distribution 
\eqref{eq:rho} converges 
to the delta-distribution $\delta(\omega)$ in this limit, reflecting the fact that vacuum measurements of the Hamiltonian result in $0$ with probability $1$. 

Our main interest, however, is in the effect of the spatial averaging on the moments and the probability distribution for finite spatial averaging scales. Inspecting the 
moment generating  function \eqref{eq:CFTMGF}, it is clear that the $n$-th moment scales with the characteristic scale $\sigma$ as
\begin{equation}
M_n^{(\tau,s)} = \left(\frac{\tau^2}{\sigma^2}\right)^n M_n^{(\tau,0)} = \left(1+(\ell/\tau)^2\right)^{-n} M_n^{(\tau,0)}.
\end{equation}
For $n(\ell/\tau)^2\ll 1$, the moments are little changed from those obtained by pure worldline smearing. 
This is a special case of a more general effect whereby a worldline result can be obtained as a limit of a small spatial averaging scale, which will be discussed in 
Sec.~\ref{sec:transition}. At higher $n$, of course, the effects of the spatial averaging become apparent. Likewise, for a range of values $\omega$ slightly greater
than zero, the probability distribution of $\rho$ is well-approximated by its values 
for $\ell=0$ (with $\tau$ fixed), but as $\omega$ increases, the two distributions depart from one another, with the $\ell>0$ distribution decaying exponentially faster. 
An illustrative plot appears 
in Fig.~\ref{fig:probdist}. Note, however, that the probability of finding a negative measurement outcome is
given in terms of the lower incomplete $\Gamma$-function as 
\begin{equation}
\text{Prob}(T_{00}(F)\le 0) = \frac{\gamma(c/12,c/12)}{\Gamma(c/12)},
\end{equation}
which is independent of $s$ and $\tau$, and depends only on the 
central charge $c$  (provided we maintain Gaussian sampling). Some results for non-Gaussian worldline sampling can be found in Ref.~\cite{Fe&Ho18,AF19}.

Extrapolating from these results, we may expect that for general quantum field theories, spatial averaging reduces the magnitude of the quantum inequality bound and 
also causes the positive  tail of the probability distribution to decay more rapidly. Nonetheless, we may also expect that for sufficiently low moments or for a range of 
smaller values in the probability distribution, one  may neglect the effect of spatial averaging on scales small in relation to the temporal averaging. Nonetheless, not all 
features of the CFT might be expected to generalize. In particular,  here the spacetime averaged probability distribution is of the same functional form as the worldline 
averaged case, but with different parameters. As we will see, this is a special feature of  conformal fields and is not true in general. 

\begin{figure}[t]
	\begin{center}
		\begin{tikzpicture}[yscale=1,xscale=1.5]
		\draw[very thin, ->] (-2,0) -- (5,0) node[right] {$\omega$};
		\draw[very thin, ->] (0,-0.5) -- (0,7) node[right] {$P(\omega)$} ;
		\draw[dashed] (-15/24,-0.5) -- ++(0,7.5);
		\draw[dashed] (-15/120,-0.5) -- ++(0,7.5);
		\draw[smooth,domain=-0.84/24:0.3,line width=1pt,color=red] plot (15*\x,{(\x+1/24)^(-11/12)*exp(-2*\x-1/24)/15});
		\draw[line width =1.5pt,color=red] (-15/24,0.05) -- (-2,0.05);
		\draw[line width =1.5pt,color=blue] (-15/120,-0.05) -- (-2,-0.05);
		\draw[smooth,domain=-4/3600:0.3,line width=1pt,color=blue] plot (15*\x,{(\x+1/120)^(-11/12)*5^(1/12)*exp(-10*\x-1/24)/15});
		\node[below right] at (0,0) {$0$};
		\end{tikzpicture}
		\caption{The probability density $P(\omega)$ plotted for central charge $c=1$ with averaging along a worldline (left-hand curve, red) and for spacetime averaging 
		with the same temporal sampling scale $\tau$ and $\ell=2\tau$ (right-hand curve, blue). The latter is displaced to the right and decays more rapidly. The vertical 
		asymptotes occur at the quantum inequality bound in each case.}
		\label{fig:probdist}
	\end{center}
\end{figure}
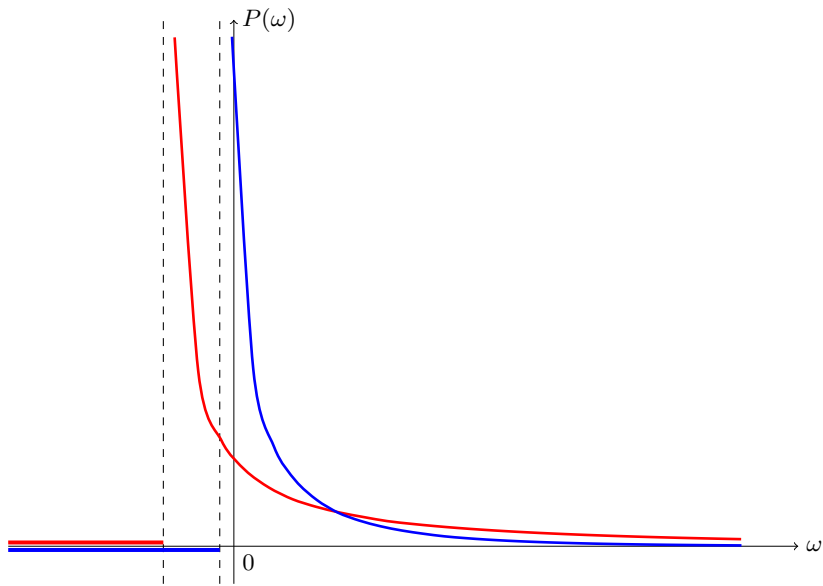

\section{Moments with Spatial Averaging}
\label{sec:spatial-ave}

\subsection{Averaged operators and their moments}

Let ${\cal T}(\xb,t)$ be a quadratic normal ordered bosonic operator in four dimensional flat spacetime, such as a stress tensor component for a free scalar or electromagnetic field. 
We consider a space and time average of this operator defined by 
\begin{equation}
T = \int_{-\infty}^\infty dt \, f(t) \, \int d^3 x \, g(\xb) \; {\cal T}(\xb,t)\,,
\end{equation} 
where $f(t)$ and $g(\xb)$ are compactly supported functions of time and of space, respectively. They are assumed to be non-negative and satisfy
 \begin{equation}
 \int_{-\infty}^\infty dt \, f(t) = 1\,,
 \label{eq:fnorm}
\end{equation} 
and 
\begin{equation}
\int d^3 x \, g(\xb) = 1.
 \label{eq:gnorm}
\end{equation} 

 Note that the averaging process breaks Lorentz symmetry. This is to be expected, as the averaging describes a measurement made in a specific spacetime region
 and in a selected frame of reference. 
The space and time averaged operator may be expanded in terms of annihilation and creation operators in the form
\begin{equation}
T = \sum_{i\, j} (A_{i j}\, a^\dagger_i \,a_j + B_{i j}\, a_i \,a_j + 
B^*_{i j} \, a^\dagger_i \,a^\dagger_j ) \,,
\label{eq:T}
\end{equation}
where $[a_i,a_j^\dagger]=\delta_{ij}\openone$, $A$ is hermitian and $B$ is symmetric.
The moments of $T$ are  defined as the vacuum expectation values of powers of $T$:
\begin{equation}
\mu_n = \langle T^n \rangle \,.
\end{equation}
The various moments can be expressed as polynomials in the matrices, $A_{i j}$ and $B_{i j}$. 
The second moment, for example, is given by
\begin{equation}
\mu_2 = 2\, \text{Tr}\, B^\dagger B = 2\sum_{j \ell} |B_{j \ell}|^2\,. 
\label{eq:mu2}
\end{equation}

The primary example which we investigate in this paper is ${\cal T} = :\dot{\varphi}^2:$, the squared time derivative of a massless scalar
field. We may write a mode expansion for $\dot{\varphi}$ as  
 \begin{equation}
\dot{\varphi}(t,\xb) = \sum_\kb\sqrt{ \frac{\omega}{2 V}}\, \left( a_\kb\, {\rm e}^{i (\kb\cdot \xb -\omega t)}
 + a^\dagger_\kb\, {\rm e}^{-i (\kb \cdot \xb -\omega t)} \right)\,,
\end{equation}
where $\omega = |\kb|$ and $V$ is a quantization volume with periodic boundary conditions, which fixes the summation lattice for $\kb$. 

Let the Fourier transforms of the sampling functions be defined by
\begin{equation}\label{eq:Fourier}
\hat{f}(\omega) =\int_{-\infty}^\infty dt \, {\rm e}^{-i\omega t} f(t)
\end{equation}
and
\begin{equation}
\hat{g}(\kb) = \int d^3 x \,  {\rm e}^{i \kb\cdot \xb}\, g(\xb) \,.
\end{equation} 
Equations \eqref{eq:fnorm} and \eqref{eq:gnorm} imply that $\hat{f}(0) = \hat{g}(0) = 1$. 
Here we assume that the sampling functions, and hence their Fourier transforms, are even, real functions.
The matrices  $A_{i j}$ and $B_{i j}$ which appear in $T$ and hence in the expressions for its moments, may be expressed in terms
of $ \hat{f}$ and $ \hat{g}$. For the case of ${\cal T} = :\dot{\varphi}^2:$, we have
\begin{equation}
A_{j \ell} = \frac{\sqrt{\omega_j \omega_\ell}}{V} \, \hat{f}(\omega_j -\omega_\ell )\, \hat{g}(\kb_j -\kb_\ell ) 
\label{eq:A}
\end{equation}
and 
\begin{equation}
B_{j \ell} = \frac{\sqrt{\omega_j \omega_\ell}}{2 V} \, \hat{f}(\omega_j +\omega_\ell )\, \hat{g}(\kb_j +\kb_\ell ) \,,
\label{eq:B}
\end{equation}
both of which are real and symmetric.

We can now understand why time averaging is essential in four spacetime dimensions. The time average contributes a factor of $\hat{f}^2(\omega_j +\omega_\ell )$ to $\mu_2$ which renders the sum over all modes in Eq.~\eqref{eq:mu2} finite. If we had averaged
only in space, then $\mu_2$ would just contain a factor of $\hat{g}^2(\kb_j +\kb_\ell )$, and receive a divergent contribution from the region
where $\kb_j = -\kb_\ell$, that is, from modes with antiparallel wavevectors.   

In Ref.~\cite{FF2015}, it was argued that there is a dominant contribution to $\mu_n$,  which is
\begin{equation}
M_n = 4 \sum_{j_1 \cdots j_n} B_{j_1 j_2}\, A_{j_2 j_3}\, A_{j_3 j_4} 
\cdots  A_{j_{n-1} j_n}\,B_{j_n j_1}\,,
\label{eq:dom}
\end{equation}
This contribution contains the maximum number of factors of $A_{j \ell} $, which tend to be larger that the corresponding $B_{j \ell} $,
because of the minus sign in the  $\hat{f}(\omega_j -\omega_\ell )$ factor, which allows it to be larger on average than the  
$ \hat{f}(\omega_j +\omega_\ell )$ factor in $B_{j \ell} $. We will assume $M_n$ continues to be the dominant contribution when
spatial averaging is included. If $\hat{f}$ and $\hat{g}$ are non-negative, all of the omitted terms are non-negative,  so  $M_n$ is always 
a lower bound on the exact moment. The construction of  non-negative $\hat{f}$ and $\hat{g}$ is discussed in Ref.~\cite{FF2015} and
in Sec.~\ref{sec:compact}.

We now give the generalization of the discussion in Sec.~IIIA of Ref.~\cite{FF2015} to the case with spatial and temporal averaging.  
Use Eqs.~(\ref{eq:A}) and (\ref{eq:B}) to write
\begin{eqnarray}
M_n &=& C_n \, \int_0^\infty d^3 k_1 \cdots d^3 k_n \;\omega_1 \cdots \omega_n \;
\hat{f}(\omega_1 +\omega_2 )  \hat{g}(\kb_1+\kb_2 ) \, \hat{f}(\omega_2 - \omega_3 )   \hat{g}(\kb_2-\kb_3 )   \cdots \\ \nonumber
&&\hat{f}(\omega_{n-1} - \omega_n )   \hat{g}(\kb_{n-1}-\kb_n ) \,   \hat{f}(\omega_n + \omega_1 )  \hat{g}(\kb_n +\kb_{1} )\,,
\label{eq:Mn} 
\end{eqnarray} 
where
\begin{equation}
C_n = \frac{1}{(2 \pi)^{3n}}\, ,
\label{eq:Cn}
\end{equation}
and we have taken the $V \rightarrow \infty$ limit. In the case that $n=2m$ is even, we can write the above expression as
\begin{equation}
M_{2m} = C_{2m} \,  \int d^3 k d^3 q \; k\, q \; 
[G_{m-1}(\kb, \qb)]^2 \,, 
\label{eq:M2m}
\end{equation} 
where $k = |\kb|$,    $q = |\qb|$,  and we define
\begin{equation}
G_{m-1}(\kb_1, \kb_{m+1}) =  \int d^3 k_2 \cdots d^3 k_m \;
\omega_2 \cdots \omega_m\:
\hat{f}(\omega_1 +\omega_2 )  \hat{g}(\kb_1+\kb_2 ) \, \hat{f}(\omega_2 - \omega_3 )  \hat{g}(\kb_2-\kb_3 )  \cdots
\hat{f}(\omega_m - \omega_{m+1} )   \hat{g}(\kb_m- \kb_{m+1} ) \,.
\label{eq:Gdef}
\end{equation}
These functions satisfy a recurrence relation
\begin{equation}
G_{m+1}(\kb,\qb) = \int d^3\ell \,\ell \,\hat{f}(q-\ell)\hat{g}(\qb- \lb)\, G_{m}(\kb,\lb)
\label{eq:Gm}
\end{equation} 
for $m\ge 0$, where 
\begin{equation}
G_0(\kb,\qb) = \hat{f}(q+k)\hat{g}(\qb+\kb) \,.
\label{eq:G0}
\end{equation}

\subsection{Compactly supported averaging functions}
\label{sec:compact}

In this paper, we assume that both $f(t)$ and  $g(\xb)$ are functions with compact support, and hence describe measurements made in both  a finite time interval and a finite spatial region. This implies that their Fourier transforms, $\hat{f}(\omega)$ and $\hat{g}(\kb)$, decay more slowly than exponentially for large values of their arguments. Starting with $f$, we assume that its support has characteristic width $\tau$ (in a specific example given below, this will be the length of the support), and that its Fourier transform behaves asymptotically as
\begin{equation}\label{eq:fasymptgen}
\hat{f}(\omega)\sim C_f {\rm e}^{-|\omega\tau|^\alpha}, \qquad |\omega|\to\infty
\end{equation}
for some constants $0<\alpha<1$ and $C_f>0$, the latter of which is fixed by the requirement that $f$ has unit integral, i.e., $\hat{f}(0)=1$.
It is further assumed that $f$ is even and nonnegative, and that the same is true of $\hat{f}$. A class of functions with these properties was constructed and discussed in detail in Sect. II of Ref.~\cite{FF2015}.

Turning to $g$, we require similar properties and, additionally, spherical symmetry. Functions of this type may be constructed as follows. Start with a nonnegative even and smooth function of compact support, $h$, with support of characteristic width $\ell$ (in an example below, this will be \emph{half} the width of the support)
and Fourier transform obeying
\begin{equation}\label{eq:hasymptgen}
\hat{h}(\omega) \sim C_h  {\rm e}^{-\eta|\omega \ell|^\lambda}, \qquad |\omega|\to\infty
\end{equation}
for some constants $\eta>0$, $0<\lambda<1$ and $C_h>0$. We also assume that $\hat{h}(\omega)$ has a maximum at $\omega=0$ and is monotone decreasing on the positive half-line, so that $\hat{h}'(\omega) \leq 0$ and $\hat{h}''(0) < 0$.
Setting 
\begin{equation}
g(\xb) = \frac{h(|\xb|/\ell)}{2\pi\, \ell^3  |\hat{h}''(0)|},
\end{equation}
we then have
\begin{equation}
\label{eq:ghat}
\hat{g}(\kb) = -\frac{2 }{k \, \ell^3 |\hat{h}''(0)|}\, \frac{d}{dk} \int_0^\infty dr\, \cos (k r) \, h(r/\ell) = \frac{\hat{h}'(k \ell)}{k \ell\, \hat{h}''(0)}  \,.
\end{equation} 
Using L'H\^opital's rule and the fact that $\hat{h}'(0)=0$ it is easily seen that $\hat{g}(0)=1$, so $g$ has unit integral over $3$-space. Note also that $\hat{g}(\kb) \geq 0$ for all $\kb$.
Furthermore, we may deduce
\begin{equation}
\hat{g}(\kb) \sim C_g  \frac{{\rm e}^{-\epsilon k^\lambda}}{k^{2-\lambda}}\qquad \text{as}~ \kb\to\infty\,,
\label{eq:gasymptgen}
\end{equation}
where 
\begin{equation}
\epsilon= \eta s^\lambda\,,\qquad C_g = \frac{\lambda\epsilon C_h}{|\hat{h}''(0)|}\,.
\end{equation} 
Here we define $s = \ell/\tau$ as the ratio of the spatial and temporal sampling widths. We will henceforth adopt units of time in which $\tau=1$, so $s = \ell$, unless otherwise noted. 
 In this situation, $\epsilon^{1/\lambda}$ measures the ratio of spatial and temporal sampling scales.

A specific example for the case $\alpha=\lambda=\tfrac{1}{2}$ may be based on results in~\cite{FF2015}, where a nonnegative smooth and even function $L$ was constructed, with support $[-1,1]$, unit integral, and nonnegative Fourier transform obeying
\begin{equation}
\hat{L}(\omega)\sim C_L {\rm e}^{-\sqrt{2|\omega|}}\qquad \text{as}~ |\omega|\to\infty\,,
\end{equation}
where the numerical value of $C_L=2.9324$ to $5$ significant figures. See in particular Figs.~4 \& 5 of Ref.~\cite{FF2015}.
Setting 
\begin{equation}
f(t) = \frac{2}{\tau} L(2t/\tau)\,, \qquad h(r)=L(r/s)\,,
\end{equation}
then $f$ has support $[-\tau/2,\tau/2]$, while $g$ is supported in a ball of radius $s$. 
Noting that $\hat{f}(\omega)=\hat{L}(\omega\tau/2)$ and $\hat{h}(\omega)=s \hat{L}(\omega s)$, the transforms of $f$ and $g$ have asymptotic behavior 
\begin{equation}
\hat{f}(\omega)\sim C_f {\rm e}^{-\sqrt{|\omega\tau|}}\qquad \text{as}~ |\omega|\to\infty\,,
\end{equation}
where $C_f=C_L$, and
\begin{equation}
\hat{g}(\kb)\sim C_g \frac{{\rm e}^{-\epsilon\sqrt{k}}}{k^{3/2}}\qquad \text{as}~ \kb\to\infty\,,
\end{equation}
where $\epsilon=\sqrt{2s}$ and $C_g$ has numerical value
\begin{equation}
C_g = \frac{27.18}{s^{3/2}}
\label{eq:Cg}
\end{equation}
The construction of some specific approximate forms for $\hat{f}(\omega)$ and $\hat{g}(k)$  is described in more detail in Appendix~A.

\section{Analysis of the iteration procedure}
\label{sec:iteration}

\subsection{Heuristic treatment}\label{sec:heuristic}

Any smooth compactly supported function has a Fourier transform that decays faster than any inverse power. Therefore the integrals in Eq.~\eqref{eq:Gm} are 
dominated by contributions from certain regions of the integration domain. Proceeding somewhat heuristically for the moment, the factor of $\hat{f}$ restricts the 
effective integration region to a shell of typical radius $\sim q$ and thickness $\rho_{\hat{f}} \propto1/\tau$, while the factor of $\hat{g}$ 
restricts the effective integration region to a ball centered at $\qb$ and of radius $\rho_{\hat{g}}\propto1/s$. Overall, the integration 
will be dominated by contributions arising from the intersection of the ball and shell, as illustrated by Fig.~\ref{fig:ball-and-shell}. 

\begin{figure}
	\begin{center} 
		\begin{tikzpicture}[scale=0.5]
		\draw[fill=orange] (0,0) circle (5);
		\draw[fill=white] (0,0) circle (4);
		\draw[pattern=north west lines, pattern color = green] (30:4.5) circle (2);
		\draw[line width=1pt,->] (0,0) -- (30:4.5) node[right]{$\qb$};
		\node[below] at (0,0) {$\boldsymbol{0}$};
		\end{tikzpicture}\hfil
		\begin{tikzpicture}[scale=0.5]
		\draw[fill=orange] (0,0) circle (2.5);
		\draw[fill=white] (0,0) circle (1.5);
		\draw[pattern=north west lines, pattern color = green] (0:2) circle (5);
		\draw[line width=1pt,->] (0,0) -- (0:2) node[right]{$\qb$};
		\node[below] at (0,0) {$\boldsymbol{0}$};
		\end{tikzpicture}
	\end{center}
	\caption{The ball and shell geometry, indicating the regime where $q$ is larger than the ball radius, in which the effects of spatial averaging are seen (left-hand figure), and the 
	regime where $q$ is smaller than the ball radius and spatial averaging is less significant (right-hand figure).} 
	\label{fig:ball-and-shell}
\end{figure}
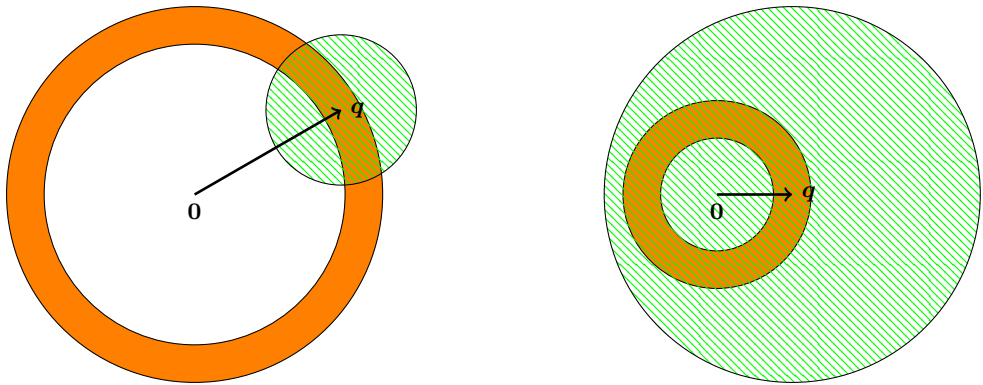

If $q$ is small in relation to the ball radius $\rho_{\hat{g}}$, the shell is contained within the ball so the integration therefore extends over the whole of the shell, 
which has a volume $\sim q^2 \rho_{\hat{f}}$. Therefore one expects, roughly, that 
\begin{equation}
\label{eq:smallq}
G_{m+1}(\kb,\qb)\sim C q^3 \,G_{m}(\kb,\qb)
\end{equation}
for such $\qb$ and a constant $C$. This is the growth rate expected in the worldline limit treated in Ref.~\cite{FF2015}, and corresponds to the factor of 
$\Omega^p$ in Eqs.~(77) and (78) 
of that paper, as we are currently dealing with the case $p=3$. On the other hand, as $q$ becomes large in relation to the radius of the ball
determined by $\hat{g}$, the effective integration region volume tends to a constant $\sim (\rho_{\hat{g}})^2\rho_{\hat{f}}$, where $\rho_{\hat{g}}$ is the effective 
support radius of $\hat{g}$ 
and similarly for $\rho_{\hat{f}}$. Therefore, for large $q$, we expect
\begin{equation}\label{eq:largeq}
G_{m+1}(\kb,\qb)\sim C' q \,G_{m}(\kb,\qb)\,,
\end{equation}
for another constant $C'$. The consequence of this is that low moments (which are largely fixed by the small $q$ regime) will behave
like those of the worldline averaged quantities, whereas higher moments grow rather less rapidly. 
The distinction between low and high moments is determined by the ratio $\rho_{\hat{g}}/\rho_{\hat{f}} \approx \tau/s$: 
the {\em smaller} the scale of {\em spatial} averaging relative to temporal averaging, 
i.e., the {\em larger} the ratio $\rho_{\hat{g}}/\rho_{\hat{f}}$ of {\em momentum space} averaging scales, the larger $q$ must be to detect the 
effect of spatial averaging and therefore the higher 
the threshold beyond which the moments $M_n$ are affected by the spatial averaging.  This fits in with some basic intuition: on one hand, if one shrinks the spatial averaging to a 
$\delta$-function, one ought to obtain the worldline results, consistent with Eq.~\eqref{eq:smallq}; on the other, one would expect that broadening the spatial averaging 
should suppress the 
effects of short-wavelength modes relative to the worldline case and therefore diminish the probability of large fluctuations. 
These expectations are in agreement with the exact results found for CFTs in Sec.~\ref{sec:CFT}.
For the energy density, in fact, if averaging extends uniformly across a full Cauchy surface, one obtains a multiple of the Hamiltonian and all fluctuations vanish 
because the vacuum is  an eigenstate of the Hamiltonian. 
Note, however, that the vacuum is not an eigenstate of the operators formed by integrating stress tensor components other than the energy density over all space.
Nonetheless, we will find that spatial averaging of these stress tensor components also reduces the probability of large vacuum fluctuations.

In the rest of this section we investigate these heuristic ideas more quantitatively by both numerical and analytic means.  

\subsection{The first iteration}
\label{sec:first_iteration}

To start, we consider in more detail how to approximate the first iterate $G_1(\kb,\qb)$, given by
\begin{equation}
G_1(\kb,\qb) = \int d^3\lb\,\ell\hat{f}(q-\ell)\hat{g}(\qb-\lb) \hat{f}(k+\ell)\hat{g}(\kb+\lb)\,,
\end{equation}
in the regime where $\qb$ and $\kb$ both tend to infinity though not necessarily at the same rate.
Each of the Fourier transforms in the integrand decays rapidly as the magnitude of its argument increases. 
Therefore the dominant contributions to the integral are expected to arise from regions where $\lb\approx \qb$ or $\lb\approx -\kb$. Unless $\kb\approx -\qb$, a case that we defer for the moment, these two regions are well-separated as $\qb,\kb\to\infty$ and their contributions may be analysed separately.

Consider first the contribution from $\lb\approx\qb$. In this region, $\hat{f}(k+\ell)\hat{g}(\kb+\lb)\approx \hat{f}(k+q)\hat{g}(\kb+\qb) = G_0(\kb,\qb)$, and therefore the contribution to $G_1$ is expected to be approximately
\begin{equation} \label{eq:G1A}
q  I(q) G_0(\kb,\qb)
\end{equation}
where the function $I(q)$ is defined as
\begin{equation}
I(q) = \int d^3 \lb\, \hat{f}(q-\ell) \hat{g}(\qb-\lb)\,,
\label{eq:Iq}
\end{equation}
and will be called the \emph{iteration coefficient}; note that it depends only on the magnitude $q$ of $\qb$ due to spherical symmetry of $g$. 
The iteration coefficient will be studied in more detail below; in particular, it has a finite, non-zero limit as $q\to\infty$. 

On the other hand, in the region where  $\lb\approx -\kb$ we may approximate $\hat{g}(\qb-\lb)\hat{f}(k+\ell)\approx \hat{g}(\qb+\kb) \hat{f}(2k)$, maintaining the assumption that $\kb\not\approx -\qb$. The contribution is then approximately 
\begin{equation}\label{eq:G1preB}
k \hat{g}(\qb+\kb) \hat{f}(2k) \int d^3\lb\, \hat{f}(q-\ell) \hat{g}(\kb+\lb)\,.
\end{equation} 
Under the additional assumption that $q\gg k$ the $\hat{f}$ factor may be taken outside the integral, 
using $\hat{f}(q-\ell)\approx \hat{f}(q-k) \approx \hat{f}(q+k)$, giving
an approximate contribution 
\begin{equation}\label{eq:G1B}
k \hat{g}(\qb+\kb) \hat{f}(q+k) \hat{f}(2k) \int d^3\lb\, \hat{g}(\kb+\lb) = 
(2\pi)^3 g(0) k\hat{f}(2k) G_0(\kb,\qb)
\end{equation}
to $G_1$. Owing to the rapid decay of $\hat{f}(2k)$, this contribution is subdominant relative to that of Eq.~\eqref{eq:G1A} and we deduce that
\begin{equation}
G_1(\kb,\qb)\approx q  I(q) G_0(\kb,\qb)
\end{equation}
as $\qb,\kb\to\infty$ with $q\gg k$. 
Alternatively, suppose that $\qb$ and $\kb$ have comparable magnitudes. Provided that $\kb\not\approx -\qb$, 
we may then approximate Eq.~\eqref{eq:G1preB} using $\hat{f}(2k)\approx \hat{f}(k+q)$, and replacing $q$ by $k$ under the integral. 
Then Eq.~\eqref{eq:G1preB} contributes approximately 
$k I(k) G_0(\kb,\qb)$ to $G_1(\kb,\qb)$. Combining with Eq.~\eqref{eq:G1A} we have in total
\begin{equation}
G_1(\kb,\qb)\approx  [qI(q)+kI(k)] G_0(\kb,\qb)
\end{equation}
as $\qb,\kb\to\infty$ with $\kb\not\approx -\qb$. In particular, 
\begin{equation}
G_1(\qb,\qb)\approx 2 q I(\qb)G_0(\qb,\qb)  \sim 2 q \, I(\infty)\, G_0(\qb,\qb) 
\label{eq:G1parallel}
\end{equation}
as $\qb\to\infty$. 

If $\kb\approx -\qb$ the two contributing regions overlap and should not be analysed separately. Instead, we expect that
\begin{equation}
G_1(-\qb,\qb) \approx q \hat{f}(2q)  \int d^3 \lb\, \hat{f}(q-\ell) \hat{g}(\qb-\lb)^2  < q \hat{f}(2q)  \int d^3 \lb\, \hat{f}(q-\ell) \hat{g}(\qb-\lb)
=   q \,\hat{f}(2q)\, I(\qb)  \,,
\label{eq:G1antiparallel}
\end{equation}
where the inequality arises because $0 \leq \hat{g} \leq 1$.
 
The ability to  pull factors such as $ \hat{f}(2q)$ out of the integral arises because these functions become flat for large arguments, as was noted above Eq.~(77) in
\cite{FF2015}. More precisely,   $\hat{f}'(\omega)/\hat{f}(\omega) \rightarrow 0$ as $\omega \rightarrow \infty$,  so $\hat{f}'=o(\hat{f})$. In addition,
the function $\hat{h}$ defined in Appendix A satisfies $|\hat{h'}|/\hat{h} \alt 0.33$, and is hence relatively flat for all values of its arguement.

\subsection{The iteration coefficient}
\subsubsection{Form for large $q$}

Our basic hypothesis is that under the iteration Eq.~\eqref{eq:Gm},
\begin{equation}
G_{m+1}(\kb,\qb) \approx  q I(\qb) G_{m}(\kb,\qb)
\end{equation}
for $q\gg k$,
where the iteration coefficient, $I(\qb)$, was defined in Eq.~\eqref{eq:Iq}.
Changing variables to $\mb=\qb-\lb$, 
\begin{equation}
I(q)=\int d^3\mb \fhat(q-\|\qb-\mb\|) \ghat(\mb)
\end{equation}
Our aim is to show that $I(q)\rightarrow I(\infty)$ as $q \rightarrow \infty$, where
\begin{equation}
I(\infty) =  \int d^3\mb \fhat(\qbh\cdot \mb ) \ghat(\mb) \, ,
\label{eq:Iinf1}
\end{equation}
and $\qbh=\qb/q$ is a unit vector along $\qb$.

To prove this, note that for each fixed $\mb$, one has 
\begin{equation}
q-\|\qb-\mb\|= q (1- (1- 2\qbh\cdot \mb/ q + m^2/q^2)^{1/2}) \rightarrow \qbh\cdot \mb
\end{equation}
as $\|\qb\|\to\infty$. Therefore the integrand approaches the required form pointwise.
Noting also that $\fhat(\omega)\le \fhat(0)$ for all $\omega$, and that $\fhat(0)\ghat(\mb)$ is
integrable, the required result follows by the dominated convergence theorem. 
We call $I(\infty)$  the asymptotic iteration coefficient, and identify it with the constant $C'$ which appeared in Eq.~\eqref{eq:largeq}.

\subsubsection{A coordinate space form of $I(\infty)$}

We may write Eq.~\eqref{eq:Iinf1} as
\begin{equation}
I(\infty) = 2 \pi \int_0^\infty dm\, m^2 \, \hat{g}(m) \int_{-1}^1 dc \, \hat{f}(mc) = 2 \pi \int_0^\infty dm\, m \, \hat{g}(m) \int_{-m}^m d\xi \, \hat{f}(\xi)\,,
\label{eq:Iinf2}
\end{equation}  
where $c$ is the cosine of the angle between $\mb$ and  $\qb$, and we let $\xi = m\, c$. Next we use Eq.~\eqref{eq:Fourier} and perform the $\xi$-integration
to write
\begin{equation}
I(\infty) = 2 \pi  i\,\int_{-\infty}^\infty dt \, \frac{f(t)}{t} \int_0^\infty dm\, m \, \hat{g}(m)\, ({\rm e}^{-imt} - {\rm e}^{imt})\,.
\end{equation} 
Next use Eq.~\eqref{eq:ghat} and the fact that $\hat{h}'(m s)$ is an odd function to write 
\begin{equation}
I(\infty) = \frac{2 \pi  i}{s \,\hat{h}''(0)} \, \int_{-\infty}^\infty dt \, \frac{f(t)}{t}  \int_{-\infty}^\infty dm  \,\hat{h}'(m s) \, {\rm e}^{-imt}
=  -\frac{2 \pi}{s^2 \,\hat{h}''(0)} \   \int_{-\infty}^\infty dt \, f(t)\, \int_{-\infty}^\infty dm  \,\hat{h}(m s) \, {\rm e}^{-imt}\,.
\end{equation} 
In the second step above, an integration by parts was performed using $\hat{h}(m s) \rightarrow 0$ as $m \rightarrow \pm \infty$.
Finally, we recognize that the $m$-integration is an inverse Fourier transform yielding $2 \pi h(-t/s) = 2 \pi h(t/s)$ to obtain
\begin{equation}
 I(\infty) = -\frac{4 \pi^2}{s^3 \,\hat{h}''(0)} \   \int_{-\infty}^\infty dt \, f(t)\, h(t/s) \,.
 \label{eq:Iinf-coord}
\end{equation} 
We may use Eq.~\eqref{eq:Fourier} to write 
\begin{equation}
\hat{h}''(0) = -  \int_{-\infty}^\infty dt \, t^2 \, h(t) = -2 \int_{0}^\infty dt \, t^2 \, h(t) \,,
\end{equation} 
which allows $I(\infty)$ to be calculated directly from the coordinate space sampling functions, $f(t)$ and $h(t)$.

Recall that $f(t)$ has a characteristic width $\tau =1$, and $h(t/s)$ has width $s$. It is of interest to consider the limits in which
one of these widths is large compared to the other. First consider the case of a large spatial sampling region, $s \gg 1$. This causes
$h(t/s)  \approx h(0)$, and we may use $ \int_{-\infty}^\infty dt \, f(t) =1$ to write
\begin{equation}
 I(\infty) \approx -\frac{4 \pi^2 \, h(0)}{s^3 \,\hat{h}''(0)} \, , \qquad s \gg 1\,.
 \label{eq:large-s}
\end{equation} 
In the opposite limit of a small spatial sampling scale, we note that the function $h(t/s)$ forces the integral to get its dominant contribution
from small $t$, so  $f(t)  \approx f(0)$, and now we use $ \int_{-\infty}^\infty dt \, h(t/s) =s$ to find
\begin{equation}
 I(\infty) \approx -\frac{4 \pi^2 \, f(0)}{s^2 \,\hat{h}''(0)} \, , \qquad s \ll 1\,.
 \label{eq:small-s}
\end{equation} 
The powers of $s^{-3}$ and $s^{-2}$ which appear in Eqs.~\eqref{eq:large-s} and \eqref{eq:small-s}, respectively, will be numerically confirmed in
Sec.~\ref{sec:tail-form}.

 \subsection{Test of the iteration procedure}
 
 Here we wish to test numerically a special case of our proposed iteration procedure. Specifically, we expect that
 \begin{equation}
G_{1}(\kb,\qb) \approx q \, I(\infty) \, G_{0}(\kb,\qb)\,,
\end{equation} 
in the limit that $q \gg k$. Define 
\begin{equation}
R = \frac{G_{1}(\kb,\qb)}{q \, I(\infty) \, G_{0}(\kb,\qb)}\,. 
\end{equation} 
We numerically evaluate $G_{1}(\kb,\qb)$ and $G_{0}(\kb,\qb)$, using Eqs.~(\ref{eq:Gm}) and (\ref{eq:G0}), Here we use the approximate forms of
$\hat{f}(\omega)$ and $\hat{g}(k)$ for the case $\alpha = \lambda =1/2$ given in Appendix~A.

  \begin{figure}[htbp]
	\centering
		\includegraphics[scale=0.3]{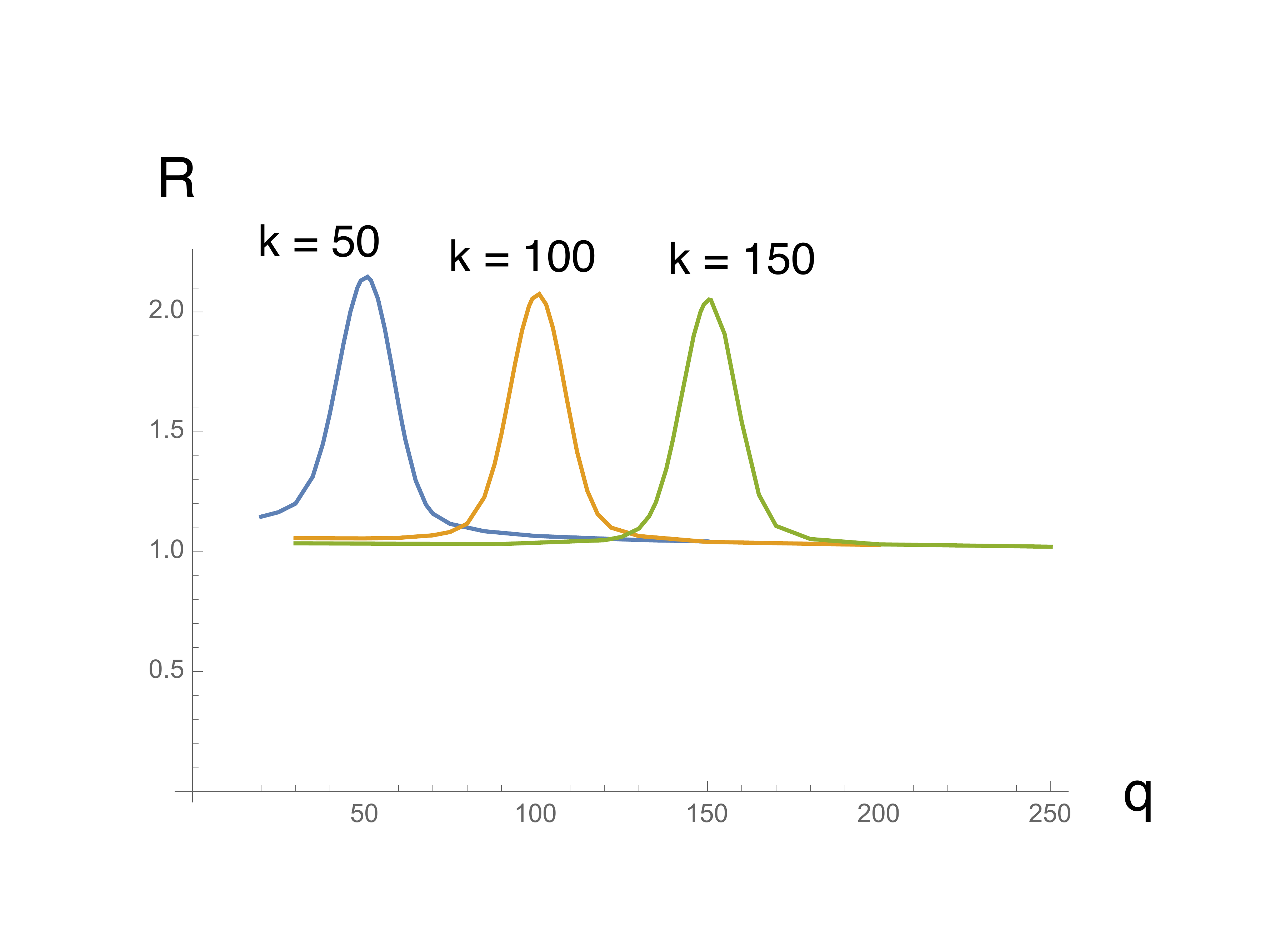}
		\caption{The ratio $R$ of the iteration integral to its expected asymptotic value for large $q$ is plotted as a function of $q$ for three choices of
		$k$ when $\mathbf{q}$ and $\mathbf{k}$ are parallel. 
		 Note that there is a local maximum  when $q \approx k$, but $R \rightarrow 1$ when  $q \gg k$.}
 	\label{fig:qkplot1}
\end{figure}
 \begin{figure}[htbp]
	\centering
		\includegraphics[scale=0.3]{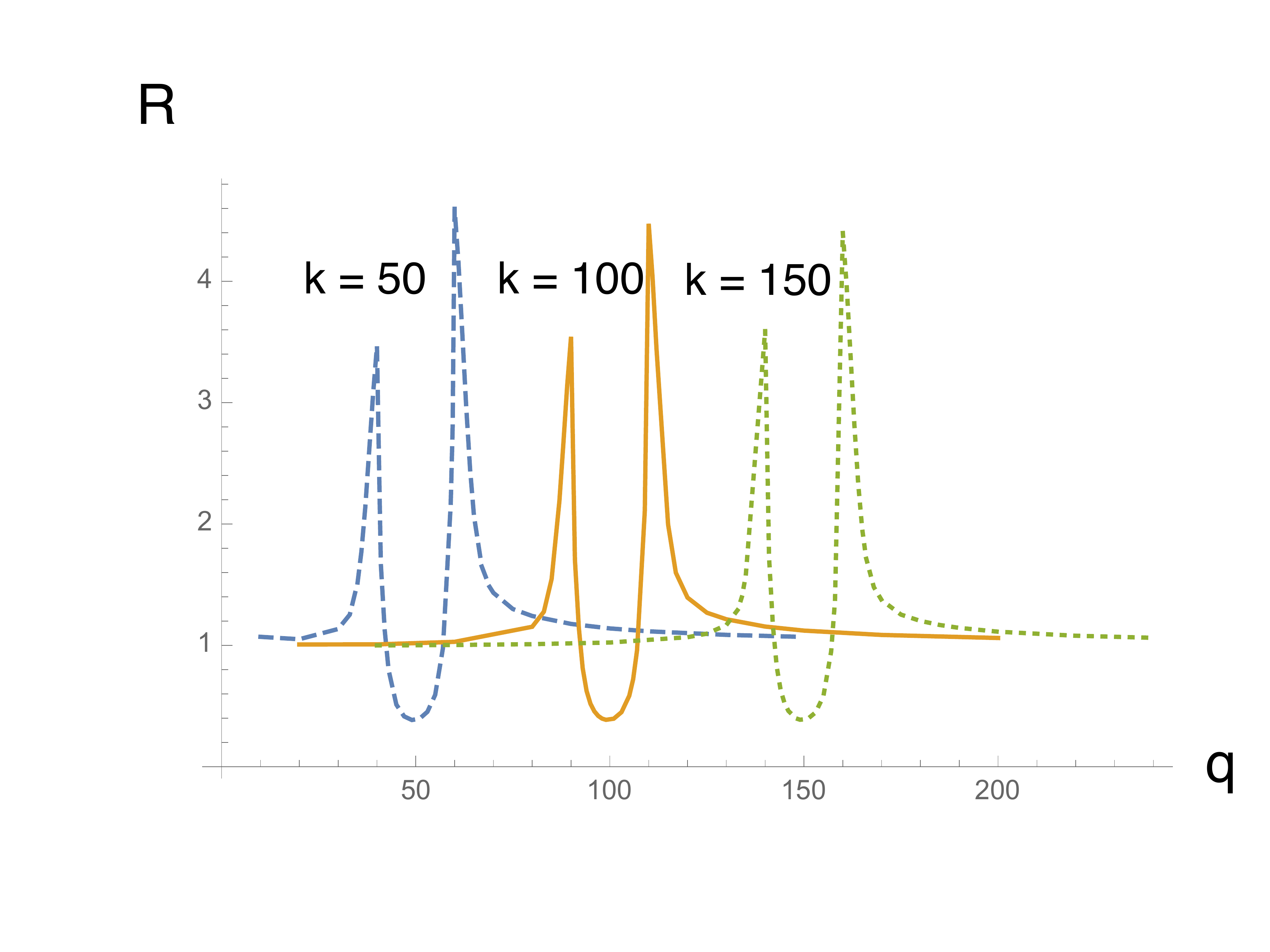}
		\caption{The ratio $R$ as a function of $q$ is repeated for the case that $\mathbf{q}$ and $\mathbf{k}$ are antiparallel. 
		Now there is a local minimum  when $q \approx k$, surrounded by local maxima, but again  $R \rightarrow 1$ when  $q \gg k$.}
 	\label{fig:qkplot2}
\end{figure}

 The ratio $R$  is plotted  in Fig.~\ref{fig:qkplot1} as a function of $q$  for different values of $k$ when the vectors $\mathbf{q}$ and $\mathbf{k}$ are
 parallel, and in  Fig.~\ref{fig:qkplot2} when they are antiparallel.  We see that  $R\approx 1$ for large $q$, which supports our iteration hypothesis. 
We may use the results in Sec.~\ref{sec:first_iteration} to understand some of the other features in Figs.~\ref{fig:qkplot1} and ~\ref{fig:qkplot2}. 
First, there  are maxima in Fig.~\ref{fig:qkplot1} near  $\qb \approx \kb$ where $R \approx 2$. This follows from Eq.~\eqref{eq:G1parallel}, which further 
shows that the height of this ridge is bounded, so $R \rightarrow 2$ when $q \rightarrow \infty$ with $\kb = \qb$. A second feature are the minma in
Fig.~\ref{fig:qkplot2}  near $\qb \approx -\kb$, where $R < 1$. This feature 
follows from  Eq.~\eqref{eq:G1antiparallel}.

\subsection{A growth bound}

Alongside the numerical evidence supporting our iteration procedure, it is useful to have analytic worst-case bounds on the growth of $G_m$. 
We assume that there exist constants $C>0$, $0<\alpha<1$, $\tau>0$, $0<\lambda<1$ and $\epsilon>0$ such that
\begin{equation}\label{eq:assumption}
0\le \hat{f}(\omega)\le C e^{-|\omega\tau|^\alpha},\qquad
0\le \hat{g}(\kb)\le C e^{-\epsilon\|\tau\kb\|^\lambda}
\end{equation}
for all $\omega\in\RR$, $\kb\in\RR^3$. As previously, we  adopt units in which $\tau=1$. 
The parameter $\epsilon^{1/\lambda}$ measures the ratio of spatial and temporal sampling scales.

It is useful to establish some rough bounds on the way in which the functions
$G_m$ can grow with $m$. Because it is no more difficult, we study a slightly
more general problem than the recurrence relation expressed by~\eqref{eq:Gm} and~\eqref{eq:G0}. 

For integer $p\ge 1$, and with fixed test functions $f$ and $g$ whose Fourier transforms satisfy Eq.~\eqref{eq:assumption}, we define an integral operator $\Xi^{(p)}$ by
\begin{equation}
(\Xi^{(p)}G)(\kb,\qb) = \int d^3\lb\,\ell^p \hat{f}(q-\ell)\hat{g}(\qb-\lb) G(\kb,\lb)
\end{equation}
and consider the iteration $G_{m+1}=\Xi^{(p)} G_m$, with $G_0$ as in~\eqref{eq:G0}.
 
Starting from the assumption in Eq.~(\ref{eq:assumption}), our aim is to prove that
\begin{equation}\label{eq:Gm_rough}
|G_m(\kb,\qb)|\le Q_m^{(p)}(q) e^{-(k+q)^\alpha-\epsilon \|\kb+\qb\|^\lambda}
\end{equation}
for all $\kb,\qb\in\RR^3$, where $Q_m^{(p)}$ is a polynomial of degree at most $mp$ with coefficients independent of $\qb$ and $\kb$.

In our situation of interest, $p=1$, so the polynomial factor in $q$ has degree at most $m$, which supports the heuristic expectation given in Eq.~\eqref{eq:largeq}. We will need two 
useful inequalities. The first was proved as Eq.~(B6) in~\cite{FF2015}, and asserts 
\begin{equation}
x^\alpha + y^\alpha\ge (x+y)^\alpha +(1-\alpha) \min\{x,y\}^\alpha
\label{eq:B6}
\end{equation}
which holds for $x,y>0$ and $0<\alpha<1$. Here, we also require an analogous inequality on vector
norms,
\begin{align}
\|\xb\|^\alpha + \|\yb\|^\alpha &\ge (\|\xb\|+\|\yb\|)^\alpha + (1-\alpha)\min\{\|\xb\|,\|\yb\|\}^\alpha \notag\\
&\ge 
\|\xb + \yb\|^\alpha + (1-\alpha)\min\{\|\xb\|,\|\yb\|\}^\alpha
\label{eq:B6-v}
\end{align}
for $\xb,\yb\in\RR^3$, $0<\alpha<1$, where in the first step we apply~\eqref{eq:B6} to $x=\|\xb\|$ and $y=\|\yb\|$ and in the second,  we have applied the ordinary triangle inequality, and the 
fact that $0<\alpha<1$.

The proof of Eq.~\eqref{eq:Gm_rough} is inductive. The statement is true 
by assumption for $m=0$, because it follows from Eq.~\eqref{eq:assumption} and Eq.~\eqref{eq:G0} that
\begin{equation}
|G_0(\kb,\qb)|\le  C^2 e^{-(k+q)^\alpha-\epsilon \|\kb+\qb\|^\lambda}
\end{equation}
for all $\kb,\qb\in\RR^3$. So let us now suppose that \eqref{eq:Gm_rough} holds for some $m\ge 0$. 
We obtain
\begin{equation}
|G_{m+1}(\kb,\qb)|\le C^2 \int d^3\ell\,\ell^p Q_m^{(p)}(\ell)
e^{-|q-\ell|^\alpha-\epsilon \|\qb-\lb\|^\lambda} e^{-(k+\ell)^\alpha-\epsilon \|\kb+\lb\|^\lambda}
\end{equation}

Expanding the degree-$mp$ polynomial $Q^{(p)}_m$, it is clearly sufficient for our inductive argument to show that integrals of the form 
\begin{equation}
L^{(r)}(\kb,\qb):= \int d^3\ell\,\ell^r
e^{-|q-\ell|^\alpha-\epsilon \|\qb-\lb\|^\lambda} e^{-(k+\ell)^\alpha-\epsilon \|\kb+\lb\|^\lambda},
\end{equation}  
with $r\ge p\ge 1$, obey bounds of the form
\begin{equation}\label{eq:Lrough}
L^{(r)}(\kb,\qb)\le P^{(r)}(q) e^{-(k+q)^\alpha-\epsilon \|\kb+\qb\|^\lambda}
\end{equation}
for all $\kb,\qb$, where $P^{(r)}$ is a polynomial of degree $r$ with coefficients independent of $\kb$ and $\qb$, whose leading coefficient is also independent of $r$.

To prove the estimate~\eqref{eq:Lrough}, we apply \eqref{eq:B6-v} to obtain
\begin{align*}
L^{(r)}(\kb,\qb) &\le   e^{-\epsilon \|\kb+\qb\|^\lambda} 
\int d^3\ell\,\ell^r e^{-|q-\ell|^\alpha-(k+\ell)^\alpha}
e^{ -\epsilon(1-\lambda)\min(\|\qb-\lb\|,\|\kb+\lb\|)^\lambda} \,.
\end{align*}
Now split the integral into the regions $\ell<2^{1/r}q$ and $\ell\ge 2^{1/r} q$. 
In the first of these, we can use the fact that $\ell^r<2q^r$ if $r\ge 1$,  
further, we apply \eqref{eq:B6} to find
\begin{equation}
e^{-|q-\ell|^\alpha-(k+\ell)^\alpha} \le e^{-(k+q)^\alpha-(1-\alpha)\min(|q-\ell|,k+\ell)^\alpha} \le e^{-(k+q)^\alpha}.
\end{equation}
Thus the contribution is bounded from above by 
\begin{equation}\label{eq:roughinner}
 2q^r  e^{-(k+q)^\alpha-\epsilon \|\kb+\qb\|^\lambda} 
\int_{\ell<2^{1/r}q} d^3\ell\,
e^{ -\epsilon(1-\lambda)\min(\|\qb-\lb\|,\|\kb+\lb\|)^\lambda}.
\end{equation}
In the second region, we use $e^{-(k+\ell)^\alpha}\le e^{-(k+q)^\alpha}$
to see that the contribution is bounded by 
\begin{equation}\label{eq:roughouter}
S_{r,\alpha} e^{-(k+q)^\alpha-\epsilon \|\kb+\qb\|^\lambda} 
\int_{\ell> 2^{1/r}q} d^3\ell\,
e^{ -\epsilon(1-\lambda)\min(\|\qb-\lb\|,\|\kb+\lb\|)^\lambda},
\end{equation}
where 
\begin{align}
S_{r,\alpha}&:=\sup_{q> 0}\sup_{\ell>2^{1/r}q} \ell^r e^{-(\ell-q)^\alpha}  \notag\\
&= \sup_{q> 0}\sup_{\ell>2^{1/r}q} (1-q/\ell)^{-r} (\ell-q)^r e^{-(\ell-q)^\alpha} \notag\\
&\le (1-2^{-1/r})^{-r} \sup_{x>0} x^{r/\alpha} e^{-x} = (1-2^{-1/r})^{-r} (r/\alpha)^{r/\alpha}e^{-r/\alpha}.
\end{align}
As the upper bound suggests, $S_{r,\alpha}$ will grow rapidly in $r$ for fixed $\alpha$. We may recombine the estimates \eqref{eq:roughinner} and \eqref{eq:roughouter} as
\begin{equation}
L^{(r)} (\kb,\qb)\le  (2q^r+ S_{r,\alpha})  e^{-(k+q)^\alpha-\epsilon \|\kb+\qb\|^\lambda} 
\int d^3\ell\,
e^{ -\epsilon(1-\lambda)\min(\|\qb-\lb\|,\|\kb+\lb\|)^\lambda},
\end{equation}
where we have simply estimated the individual integrals by their extension to all of $\RR^3$. Using the elementary fact 
\begin{equation} \label{eq:expest}
e^{-\min\{A,B\}}\le e^{-A}+e^{-B}
\end{equation}
and the freedom to translate the origin of coordinates, one has
\begin{align*}
\int d^3\ell\,
e^{ -\epsilon(1-\lambda)\min(\|\qb-\lb\|,\|\kb+\lb\|)^\lambda} \le
2 \int d^3\ell\,
e^{ -\epsilon(1-\lambda)\ell^\lambda} &=  8\pi \int_0^\infty d\ell\,\ell^2 e^{-\epsilon(1-\lambda)\ell^\lambda} = 
\frac{8\pi\Gamma(3/\lambda)}{\lambda(\epsilon(1-\lambda))^{3/\lambda}} ,
\end{align*}
which gives overall, 
\begin{equation}
L^{(r)}(\kb,\qb)\le \frac{8\pi\Gamma(3/\lambda)}{\lambda(\epsilon(1-\lambda))^{3/\lambda}} 
 (2q^r+ S_{r,\alpha})  e^{-(k+q)^\alpha-\epsilon \|\kb+\qb\|^\lambda} .
\end{equation}
Accordingly,  $L^{(r)}(\kb,\qb)$ 
is bounded by a polynomial in $q$ (with coefficients independent of $\kb$ and $\qb$, and leading coefficient independent of $r$) multiplied by $e^{-(q+k)^\alpha-\epsilon \|\kb+\qb \|^\lambda}$. This concludes the inductive proof of the bound \eqref{eq:Gm_rough}.

We make no claim that this is the tightest possible upper bound that could be derived. However,
the argument is relatively simple and indicates a worst-case growth rate for the functions $G_m(\kb,\qb)$ that is nonetheless broadly in line with the heuristic discussion of 
Sec.~\ref{sec:heuristic}, in the case $p=1$.

\section{Rate of Growth of the Moments }
\label{sec:growth}

\subsection{Approximate Forms of the Moments}
\label{sec:approx-forms}

Recall that in the iteration procedure for $G_{m}(\kb,\qb)$, using Eq.~(\ref{eq:Gm}), we expect for the initial iterations to each bring out a factor 
proportional to $q^3$, and the later iterations to each bring out a factor of $I(\infty)\, q$. Thus, for $m \gg 1$, we expect the asymptotic form for 
 $G_{m}(\kb,\qb)$,  to be
 \begin{equation}
G_{m}(\kb,\qb) \approx C \, [I(\infty)]^m \, q^{m + \mu}\, G_{0}(\kb,\qb) \, 
\end{equation}  
where $C$ and $\mu$ are constants which correct for the possibility that the first several iterations bring out different constants and powers of $q$
than do the later iterations. If we use this form in Eq.~(\ref{eq:M2m}), we find
\begin{equation}
M_n \approx C_n \, C^2\, [I(\infty)]^{n-2} \, S_{n+2\mu -1}\,,
\label{eq:Mnasy}
\end{equation} 
where
\begin{equation}
S_N = \int d^3\qb \,q^{N} \int d^3\kb \, k\,  \hat{f}^2(q+k)   \hat{g}^2(\qb+ \kb) \,.  
\label{eq:SN}
\end{equation} 
We will estimate this integral for the case that $N \gg 1$. As we expect that the dominant contribution comes from  $q \gg k$, we approximate $|\qb+\kb|\approx q$.  
If we assume  that $\hat{f}$ and $\hat{g}$ may be approximated by their asymptotic forms, Eqs.~\eqref{eq:fasymptgen} and \eqref{eq:gasymptgen},   then we have
\begin{equation}
S_N \approx T_N = 16 \pi^2 C_{fg}^2 \int_0^\infty dq \,q^{N+2} \int_0^\infty dk \, k^3\, {\rm e}^{-2(q+k)^\alpha}\, \frac{{\rm e}^{-2\epsilon q^\lambda}}{q^{4-2\lambda}}\,,
\label{eq:SN-approx}
\end{equation} 
where we have written
\begin{equation}
C_{fg}=C_f C_g\,.
\end{equation}
Next let $k=r -q$ to write
\begin{equation}
T_N = 16 \pi^2 C_{fg}^2  \int_0^\infty dq \,q^{N+2(\lambda-1)} \, {\rm e}^{-2\epsilon q^\lambda}\, \int_q^\infty dr (r-q)^3 \, {\rm e}^{-2 r^\alpha}\,
\end{equation} 
Define a new variable $u$ by $r = q (1+u)^{1/\alpha}$ to write to final integral above as
\begin{eqnarray}
\int_q^\infty dr (r-q)^3 \, {\rm e}^{-2 r^\alpha} &=& \frac{q^4}{\alpha}\, \int_0^\infty du \, (1+u)^{1/\alpha- 1} \,[(1+u)^{1/\alpha} - 1]^3 
\,{\rm e}^{-2 q^\alpha (1+u)} \nonumber \\
&\approx& \frac{q^4}{\alpha^4}\, {\rm e}^{-2 q^\alpha}\, \int_0^\infty du \, u^3 \, {\rm e}^{-2 q^\alpha u} = \frac{3}{8 \alpha^4}\, q^{4(1-\alpha)}\,  {\rm e}^{-2 q^\alpha}\, ,
\end{eqnarray} 
where in the second step we used the fact that the dominant contribution comes from the region where  $u \ll 1$ because 
$r \approx q$ when $q \gg k$. 
Thus we have
\begin{equation}
T_N \approx \frac{6 \pi^2 C_{fg}^2}{\alpha^4}\, \int_0^\infty dq \,q^{N+2(1+\lambda)-4\alpha} \,  {\rm e}^{-2 (q^\alpha +\epsilon q^\lambda)} \,.
\label{eq:TN}
\end{equation} 
For the case $\alpha = \lambda$, this integral may be evaluated explicitly to obtain
\begin{equation}
T_N \approx \frac{6 \pi^2C_{fg}^2}{\alpha^5}\, [2(1+\epsilon)]^{(2\alpha -N-3)/\alpha} \, \Gamma\left(\frac{N+3}{ \alpha}  -2 \right) \,.
\label{eq:TN-special}
\end{equation} 
When $\alpha = 1/2$, this becomes
\begin{equation}
T_N = \frac{192 \pi^2C_{fg}^2 \, \Gamma(2N+4)}{[2(1+\epsilon)]^{2N+4}} \,.
\label{eq:TN-half}
\end{equation} 

\subsection{Contribution from $\qb+ \kb \approx 0$}

The result in Eq.~(\ref{eq:SN-approx}), that $S_N \approx T_N$, relies upon the dominant contribution to $S_N$ coming from regions where 
$q \gg k$. when $N \gg1$. However, it is worth examining more carefully the contribution from the region where $\qb+ \kb \approx 0$,
where the argument of $\hat{g}$ becomes small, in order to show that this contribution is small in relation to $T_N$. 
In this region $k\approx q$ and the contribution to $S_N$ is therefore bounded by
\begin{align}\label{eq:SN1}
S_{N1} &= \int d^3\qb\, q^{N+1} \hat{f}(2q)^2 \int d^3\kb\, \hat{g}(\kb+\qb)^2\notag \\
&= 4\pi \int_0^\infty dq\, q^{N+3} \hat{f}(2q)^2\int d^3\kb\, \hat{g}(\kb)^2 \notag\\
&\lesssim C\int_0^\infty dq\, q^{N+3} e^{-2(2q)^\alpha} \notag\\
&\approx C' 2^{-N(1+1/\alpha)} \Gamma\left(\frac{N+4}{\alpha}\right)  
\end{align}  
for constants $C$ and $C'=C/(16^{1+1/\alpha}\alpha)$, depending on $f$, $g$ and $\alpha$ but not $N$. Here we have changed variables from $\kb$ to $\kb+\qb$ in the second line.
We need this contribution to $S_N$ be small compared to  $T_N$, our estimate for $S_N$, when $N$ is large. Next we will examine several special cases.

\subsubsection{ Case: $\alpha = \lambda = \frac{1}{2}$}
Here we have an explicit formula for $T_N$, given in Eq.~(\ref{eq:TN-half}), while 
\begin{equation}
S_{N1} \lesssim \frac{C'}{2^{3N}} \; \Gamma(2 \,N +8)\,.
\end{equation} 
This is suppressed compared to  $T_N$ by a factor proportional to
\begin{equation}
 \left(\frac{1+ \epsilon}{\sqrt{2}} \right)^{2 N } N^4\,.
\end{equation} 
This factor decreases as $N$ grows provided that $\epsilon \leq \sqrt{2} - 1 \approx 0.414$. Under this condition, in which spatial sampling takes place over modest scales relative to temporal sampling, we expect $T_N$ to be a
good approximation to $S_N$ for large $N$ for  $\alpha = \lambda = \frac{1}{2}$. 

\subsubsection{ Case: $\lambda \leq \alpha/2$}

Here we may use some asymptotic results given in Appendix B. First note that if we let $q = 2^{-1/\alpha} \,r$, then Eq.~(\ref{eq:TN})
becomes
\begin{equation}
T_N = \frac{6 \pi^2 C_{fg}^2}{\alpha^4} \, 2^{-(N+3+2\lambda)/\alpha +4}\, \int_0^\infty dr \, r^{N+2(1+\lambda)-4\alpha} \, e^{-r^\alpha -\epsilon' r^\lambda} \,
\propto 2^{-N/\alpha} \, I_{N+3+2\lambda-4\alpha}(\epsilon'),
\label{eq:TNb}
 \end{equation} 
 where $\epsilon' = 2^{1-\lambda/\alpha} \, \epsilon$, and $I_N$ is defined 
 as
 \begin{equation}
 I_N = \int_0^\infty dq\, q^{N-1} e^{-q^\alpha-\epsilon q^\lambda}\,.
 \label{eq:INtext}
 \end{equation} 
 The asymptotic forms of $I_N$ for large $N$ are given in
 Eq.~(\ref{eq:Lt-half}) when $\lambda < \alpha/2$, and in  Eq.~(\ref{eq:eq-half}) when $\lambda = \alpha/2$. Although there is a discontinuity 
 between these two forms at $\lambda = \alpha/2$ in the form of a factor of $e^{\epsilon^2/8}$, both forms have the same dependence upon
 $N$:
 \begin{equation}
 I_N(\epsilon) \propto \Gamma(N/\alpha)\, e^{-\epsilon\, (N/\alpha-1)^{\lambda/\alpha}} \sim  \Gamma(N/\alpha)\, e^{-\epsilon(N/\alpha)^{\lambda/\alpha}} \,.
 \label{eq:IN-asy}
 \end{equation} 
We may combine this result with Eqs.~(\ref{eq:SN1}) and (\ref{eq:TNb}) to write
\begin{equation}
\frac{S_{N1}}{T_N} \propto \frac{\Gamma(N/\alpha+4/\alpha)}{\Gamma((N+3+2\lambda)/\alpha-4)} \, e^{\epsilon' \,(N/\alpha)^{\lambda/\alpha}} \, 
e^{-N \,\ln2 }  \,.
\label{eq:SN1-TN}
 \end{equation} 
The ratio of gamma functions can at most grow as a power of $N$, and here $\lambda/\alpha \leq 1/2$, so the behavior of the ratio ${S_{N1}}/{T_N}$
is dominated by the $e^{-N \,\ln2 }$  factor, which decays exponentially as $N$ increases, leading to ${S_{N1}}  \ll {T_N}$ for large $N$. 

\subsubsection{ Case: $\alpha/2 \leq \lambda \leq 2 \alpha/3$}

The asymptotic form for $I_N$ in this case is given by  Eq.~(\ref{eq:Lt-two-thirds}), where $\beta = \lambda/\alpha$. Note that the exponential in
the right-hand-side of Eq.~(\ref{eq:Lt-two-thirds}) contains two terms. The first is a negative term proportional to $(N/\alpha)^\beta$, which also
appears in  Eqs.~(\ref{eq:Lt-half}) and (\ref{eq:eq-half}). The second is a positive term to proportional to $(N/\alpha)^{2 \beta-1}$.  However,
$\beta > 2 \beta-1$ in the range of interest here, so the first term dominates the exponential and again leads to the same leading order asymptotic
behavior for $I_N$ as that given in Eq.~(\ref{eq:IN-asy}). Hence, the ratio ${S_{N1}}/{T_N}$ is again given by Eq.~(\ref{eq:SN1-TN}) for large $N$.
In all of these cases, we conclude that $S_{N1}$ is asymptotically small compared to $T_N$, so the region where $\qb+ \kb \approx 0$ does
not give a large contribution to $S_N$.

\subsection{Numerical Tests of  $S_N \rightarrow T_N$}
\label{sec:SN-to-TN}

 We can test the approach of $S_N$ to its limiting form, $T_N$, for large $N$ by numerically evaluating Eqs.~\eqref{eq:SN} and \eqref{eq:TN}. In the special
 case that $\lambda = \alpha =1/2$, $T_N$ is given by Eq~\eqref{eq:TN-half}, and we may use the explicit forms for $\hat{f}$ and $\hat{g}$ constructed
 in Appendix~A to evaluate $S_N$.  In all cases, we may approximate the sampling functions in  Eq.~\eqref{eq:SN} by their asymptotic forms for large arguments 
 if $N$ is large. In this case, we use  Eq.~\eqref{eq:fasymptgen} for  $\hat{f}$. However, we need to modify the form given in Eq.~\eqref{eq:gasymptgen} for  $\hat{g}$
 to avoid a singularity at $\qb+ \kb =0$. For this purpose, we use the cutoff-dependent form
 \begin{equation}
 \hat{g}_C(k,Q_0) = C_g  \frac{{\rm e}^{-\epsilon k^\lambda}}{(k+Q_0)^{2-\lambda}}\,,
 \label{eq:cutoffghat}
 \end{equation}
 and test the dependence of the integral upon the parameter $Q_0$.
 
  The results obtained from both approaches are plotted in Fig.~\ref{fig:SNa} for the case that  $\lambda = \alpha =1/2$,  where $\epsilon = \sqrt{2 s}$,
  and agree reasonably well. The cutoff parameter
  $Q_0$ was varied between values of about $1$ and $10$ without a significant effect. We can see that for smaller values of $s$, $S_N/T_N$ becomes close to one for large
  $N$. For larger values of $s$, $S_N/T_N$ is noticeably larger than one for the range of $N$ considered.
   \begin{figure}[htbp]
	\centering
		\includegraphics[scale=0.3]{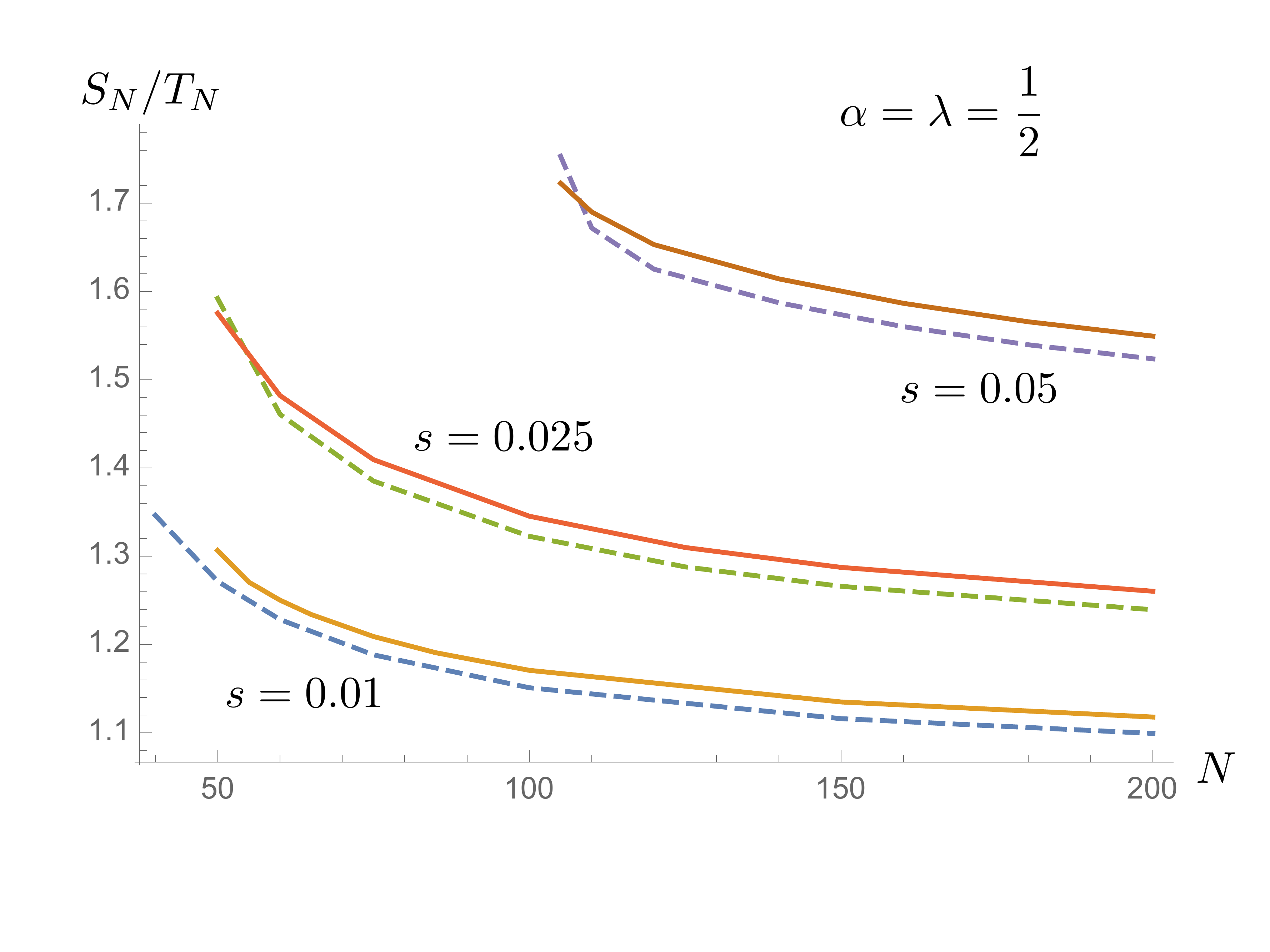}
		\caption{The ratio $S_N/T_N$ is plotted as a function of $N$ for different values of $s$ for the case $\alpha = \lambda = \frac{1}{2}$. The solid lines
		were computed using the forms for $\hat{f}$ and $\hat{g}$ constructed in Appendix~A, and the dashed lines using the asymptotic forms, Eqs.~\eqref{eq:fasymptgen}
		and  \eqref{eq:cutoffghat}.   }
	\label{fig:SNa}
\end{figure}

  Some results for  $\alpha =1/2$, but $\lambda < \alpha$ are plotted in Figs.~\ref{fig:SNb} and ~\ref{fig:SNa}. In this case, Eq.~\eqref{eq:SN} was evaluated using
  Eqs.~\eqref{eq:fasymptgen} and ~\eqref{eq:gasymptgen}. Again, the result seems to be relatively independent of   $Q_0$. Here we appear to find that  
  $S_N \rightarrow T_N$ for $N \gg1$, but that this limit is attained more quickly for smaller values of $\epsilon$ and of $\lambda$. Note that in all cases, 
  we find $S_N > T_N$, 
     \begin{figure}[htbp]
	\centering
		\includegraphics[scale=0.3]{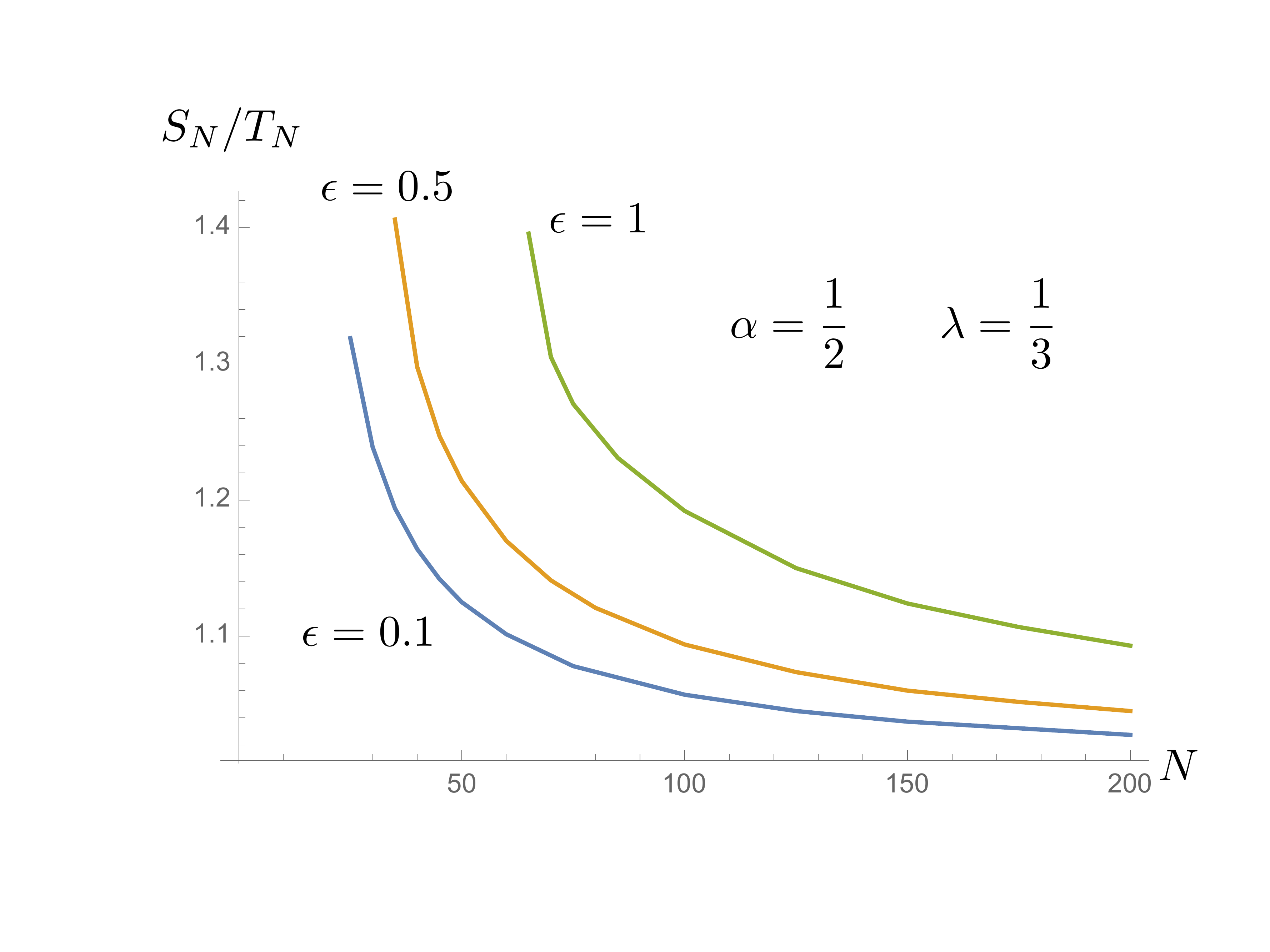}
		\caption{The ratio $S_N/T_N$ is plotted as a function of $N$ for different values of $\epsilon$ for the case $\alpha =  \frac{1}{2}$
		and $\lambda = \frac{1}{3}$.}
	\label{fig:SNb}
\end{figure}	
 
    \begin{figure}[htbp]
	\centering
		\includegraphics[scale=0.3]{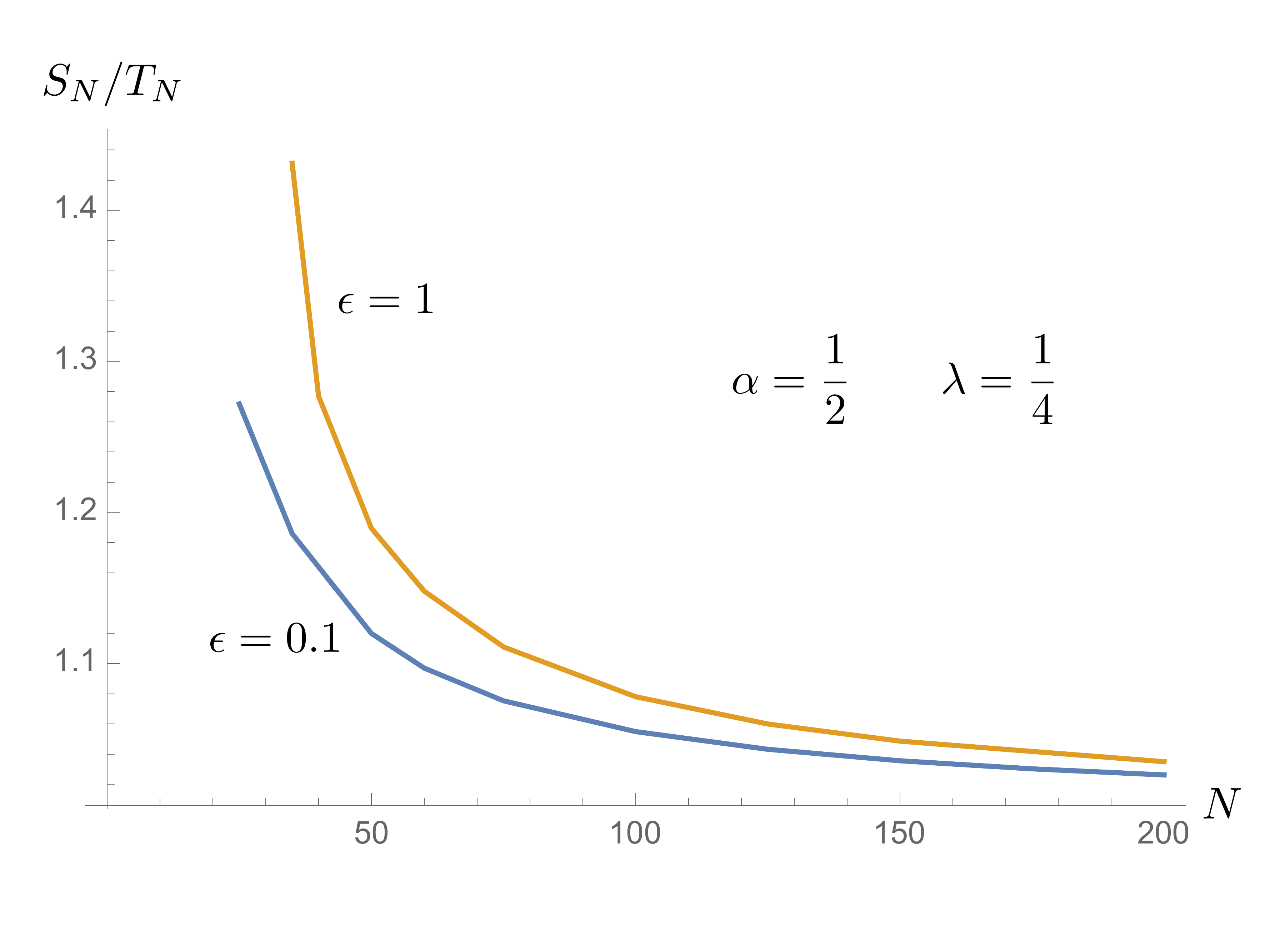}
		\caption{The ratio $S_N/T_N$ is plotted as a function of $N$ for two values of $\epsilon$ for the case $\alpha =  \frac{1}{2}$
		and $\lambda = \frac{1}{4}$. Here this ratio approaches one more quickly and is less dependent upon the value of $\epsilon$,
		as compared with the cases with larger value of  $\lambda$. }
	\label{fig:SNc}
\end{figure}	
  
 In the special case that  $\lambda < \alpha/2 < 1/2$, we are able to give a rigorous proof that $S_N/T_N \rightarrow 1$ as $N \rightarrow \infty$, but the details
 will be omitted here.

\subsection{Asymptotic Behavior of the Moments}

We may now use Eq.~\eqref{eq:Mnasy}  and assume that $S_N \approx T_N$ to write
\begin{equation}
M_n \approx C_n \, C^2\, [I(\infty)]^{n-2} \, T_{n+2\mu -1}\,,
\label{eq:Mnasy2}
\end{equation} 
for $n \gg 1$. If we let $q \rightarrow 2^{-1/\alpha}\, q $ in Eq.~\eqref{eq:TN}, then we have
\begin{equation}
T_N \approx \frac{6 \pi^2C_{fg}^2}{\alpha^4}\,  2^{4-(N+3+2\lambda)/\alpha} \, I_{N+3+2\lambda-4\alpha}(\epsilon')  \,.
\label{eq:TN2}
\end{equation} 
where $\epsilon' = 2^{1-\lambda/\alpha}\, \epsilon$ and $I_N(\epsilon)$ is defined in Eq.~\eqref{eq:IN}. Now we have
\begin{equation}
M_n \approx \frac{6 \pi^2 C_{fg}^2}{\alpha^4}\, \left[\frac{C}{I(\infty)}\right]^2 \, 2^{4-2(1+\mu+\lambda)/\alpha} \, B^n\,  I_{n+2(1+\mu+\lambda)-4\alpha}(\epsilon') \,,
\label{eq:Mnasy3}
\end{equation} 
where we have used Eq.~\eqref{eq:Cn}, and defined
\begin{equation}
B = \frac{I(\infty)}{ 2^{1/\alpha} \, ( 2\,\pi)^{3}} \,.
\label{eq:B-def}
\end{equation} 
As already mentioned, the asymptotic behavior of $I_N$ for large $N$ is discussed for several cases in Appendix~B, where it is found that
$I_N / \Gamma(N/\alpha)$ is bounded as $N\to\infty$. This leads to a factor of $\Gamma\left(\frac{ n+2(1+\mu+\lambda)}{\alpha} -4  \right )$ in $M_n$, 
which reveals that for large $n$, the moments grow no faster than $(n/\alpha)!$ (times a factor growing exponentially in $n$). This is slower than the $(3n/\alpha)!$ 
growth rate found in Ref.~\cite{FF2015} for the case of time 
averaging alone. However, if $\alpha < 1$, it is still faster than $n!$ growth.

 \section{The Tail of the Probability Distribution}
 \label{sec:tail}
 
 \subsection{The form of the tail}
  \label{sec:tail-form}

 Note that Eq.~\eqref{eq:Mnasy3} for $M_n$, the dominant contribution to the $n$-th moment, can be written as
 \begin{equation}
M_n \approx K_0 \, B^n\,  I_{n+2(1+\lambda+\mu)-4\alpha}(\epsilon') =  K_0 \, B^n\,  \int_0^\infty dq \, q^{n+1+2(\lambda+\mu)-4\alpha} e^{-q^\alpha-\epsilon' q^\lambda} \,.
\label{eq:Mnasy4}
\end{equation} 
If we let $x= B\, q$, then this expression becomes
\begin{equation}
M_n \approx K\, \int_0^\infty dx \, x^n \; \left[x^{1+2(\lambda+\mu)-4\alpha} \, e^{-(x/B)^\alpha-\epsilon'\, (x/B)^\lambda} \right]\,,
 \label{eq:MnP}
\end{equation} 
where $K_0$  and $K$  are constants independent of $n$.  
 Recall that the moments of the probability distribution, $P(x)$, are $\mu_n$, where 
 \begin{equation}
M_n \approx \mu_n = \int_{-x_0}^\infty dx \, x^n \, P(x) \approx \int_{0}^\infty dx \, x^n \, P(x)\,.
 \label{eq:MnP2}
\end{equation} 
The last step holds when $n$ is sufficiently large that the the interval  $[-x_0,0]$ makes a negligible contribution to the integral.
Comparison of Eqs.~\eqref{eq:MnP} and \eqref{eq:MnP2} suggests that 
 \begin{equation}
 P(x) \approx K \, x^{1+2(\lambda+\mu)-4\alpha} \, e^{-(x/B)^\alpha-\epsilon'\, (x/B)^\lambda} 
 \label{eq:P}
\end{equation} 
for large $x$. 

This identification is subject to the possible ambiguity that rapidly growing moments may not uniquely determine the 
probability distribution.  However, for a probability distribution which is nonzero on a half line, as is the case here,
the condition that the moments uniquely determine $P(x)$ is the Stieltjes criterion~\cite{Simon}, which requires
\begin{equation}
|\mu_n| \leq C\, D^n\, (2n)!
 \label{eq:Stieltjes }
\end{equation} 
for all $n$ for some choice of constants $C$ and $D$. We found in the previous section that here the moments grow 
no faster than $(n/\alpha)!$, so this criterion is satisfied for $\alpha \geq 1/2$ and hence $P(x)$ is uniquely determined by the moments. If $\alpha < 1/2$, then
we have the same situation as in the worldline case, where the moments might not uniquely determine $P(x)$. Nonetheless, it is possible to gain some information about the tail of the distribution, as discussed in Sec.~VI of Ref.~\cite{FFR2012}. 

The constants $K$ and $\mu$ are not determined by the methods used here, because the transition between the low order and high order iteration regimes, discussed in 
Sec.~\ref{sec:approx-forms}, is not fully understood.
However, the argument of the exponential
in Eq.~\eqref{eq:P} is determined, and governs the primary rate of decay of the tail. If $\lambda < \alpha$, the $(x/B)^\alpha$
term in Eq.~\eqref{eq:P} will eventually dominate the $(x/B)^\lambda$ term, and we will have
 \begin{equation}
 P(x) \propto e^{-(x/B)^\alpha}
 \label{eq:P2}
\end{equation} 
for sufficiently large $x$. In the case that $\lambda = \alpha$, we have the asymptotic form
 \begin{equation}
 P(x) \propto e^{-(1+\epsilon)\,(x/B)^\alpha} \,,
 \label{eq:P3}
\end{equation} 
as $\epsilon' = \epsilon$ in this case. 
Recall that $B$ is determined by Eqs.~\eqref{eq:Iinf2} and \eqref{eq:B-def}. 
In the special case that $\lambda = \alpha= 1/2$,  we may numerically compute $B$  as a function of $s = \ell/\tau$, using the the approximate forms of
$\hat{f}(\omega)$ and $\hat{g}(k)$  given in Appendix~A. The results are illustrated in Figs.~\ref{fig:B1} and \ref{fig:B2}.

 \begin{figure}[htbp]
	\centering
		\includegraphics[scale=0.3]{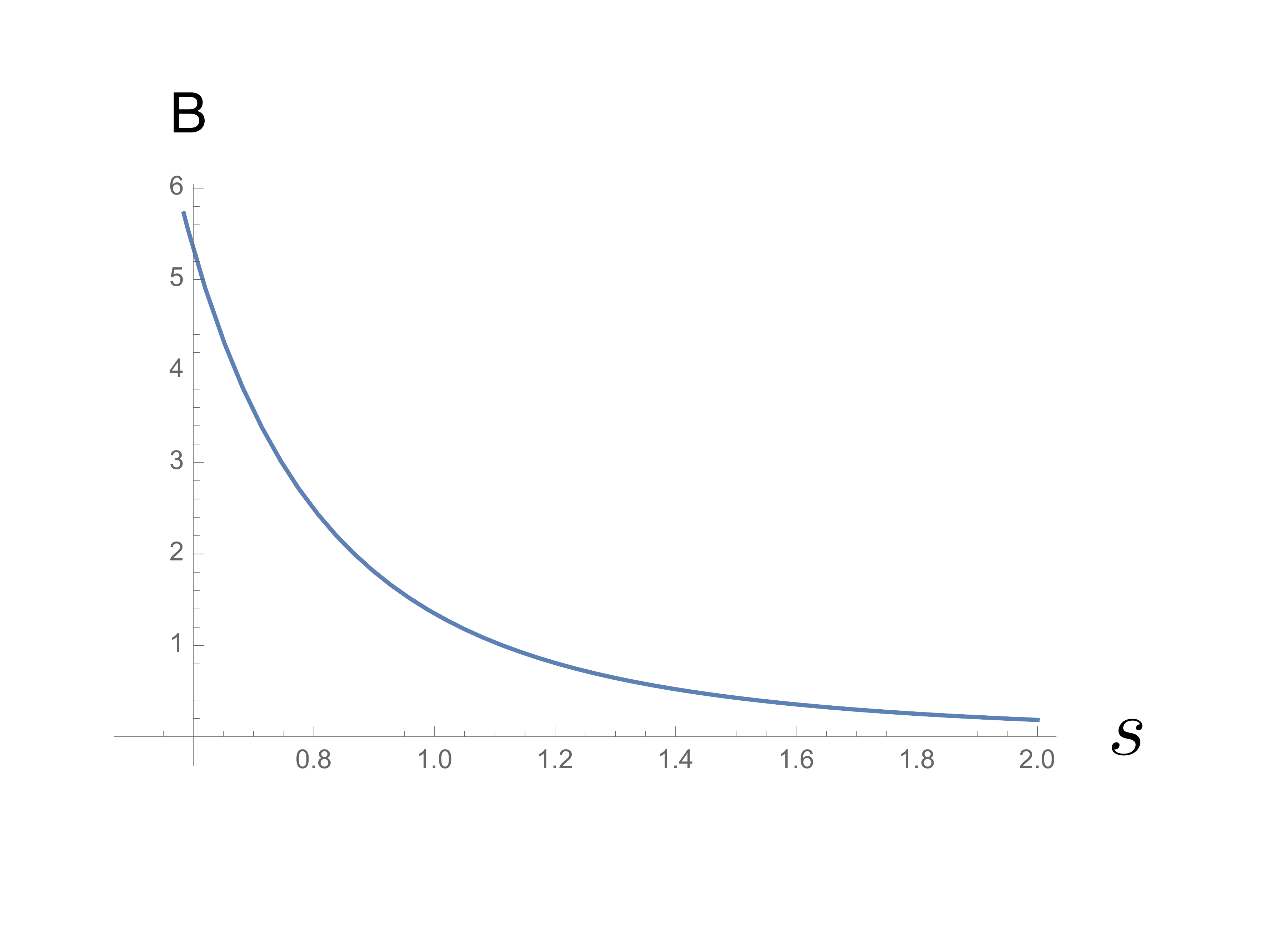}
		\caption{Here the constant $B$, which appears in the asymptotic probability distribution, is plotted as a function of the ratio of the spatial and temporal
		sampling scales, $s = \ell/\tau$ for the case that $\lambda = \alpha= 1/2$.  Note that $B \approx 1$ when $s = 1$, and decreases as $s$ increases. }
			\label{fig:B1}
\end{figure}
 \begin{figure}[htbp]
	\centering
		\includegraphics[scale=0.3]{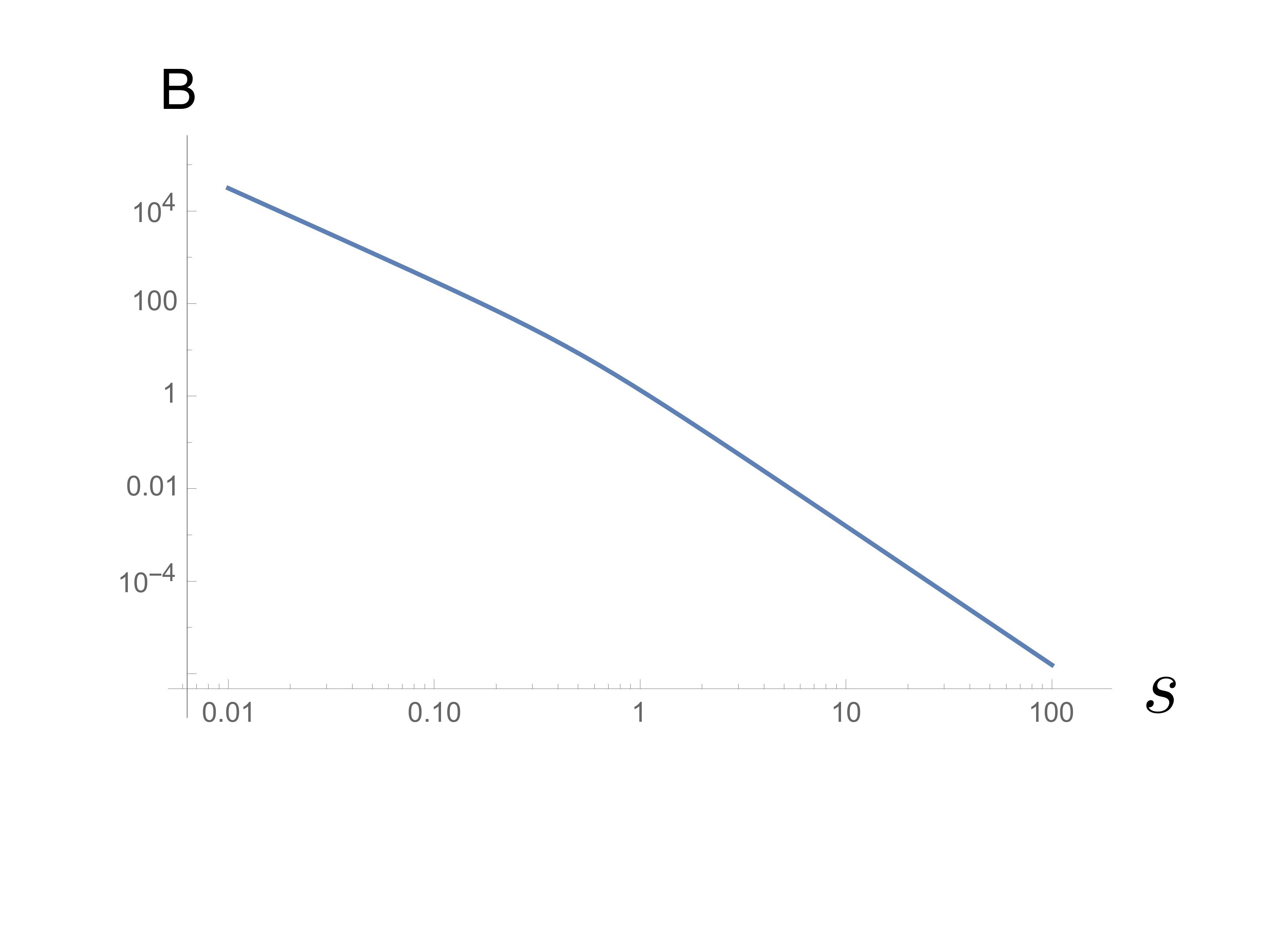}
		\caption{Here $B$ for the case that $\lambda = \alpha= 1/2$ is plotted over a larger range on a log-log plot. Note that $B \propto s^{-2}$ for $s \alt 1$, 
		in accordance with Eq.~\eqref{eq:small-s}, as $B$ decreases from $10^4$ to about $1$ as $s$ increases from $0.01$ to $1$. Furthermore.
		$B \propto s^{-3}$ for $s \agt 1$, in accordance with Eq.~\eqref{eq:large-s}. Here $B$ decreases by about six orders of magnitude as   $s$ increases 
		from $1$ to $100$.}
			\label{fig:B2}
\end{figure}
In all regions, $B$ decreases as $s$ increases. As smaller values of $B$ suppress the probability of a fluctuation with a given dimensionless magnitude $x$, this is 
consistent with the intuition that increasing $\ell$ relative to $\tau$ decreases  the probability of a large fluctuation.

\subsection{The transition from worldline behavior to spacetime averaged behavior}
\label{sec:transition}

Recall that in Ref.~\cite{FF2015}, the averaging along a worldline alone was treated, and the asymptotic form of the probability distribution was found to
be of the form
\begin{equation}
P(x) \sim c_0\, x^b \, {\rm e}^{-a x^c}
\label{eq:asymp-P}
\end{equation} 
with $c = \alpha/3$. In contrast, the asymptotic form of the spacetime averaged distribution, for $\lambda \leq \alpha$, has a similar form, but with $c = \alpha$. 
The effect of the spatial averaging has been to enhance the rate of decrease of the tail of $P(x)$. However, if the spatial sampling scale  $s$ is small compared to
the temporal scale $\tau$, we expect a finite region in $x$ where the worldline form holds approximately. This is the regime depicted in the right part of 
Fig.~\ref{fig:ball-and-shell}, when $q \alt 1/s$ in $\tau =1$ units, and when each iteration produces a factor of $q^3$, as predicted by Eq.~\eqref{eq:smallq}. 
In this regime, the $n$-th moment, given by Eq.~\eqref{eq:M2m}, will contain an integral on $q$ of the form 
\begin{equation}
\int_0^\infty dq \, q^{3n+3}\, \hat{f}^2(q) \approx C_f^2 \int_0^\infty dq \, q^{3n+3}\, {\rm e}^{-2 q^\alpha}\,,
\end{equation} 
where we assume $n \gg 1$ and use Eq,~\eqref{eq:fasymptgen}.
The peak of this integrand, and hence the region which gives the dominant contribution to the integral,  occurs at 
\begin{equation}
q = q_* = \left[\frac{3(n+1)}{2 \alpha}\right]^{1/\alpha} \approx \left(\frac{3 n}{2 \alpha}\right)^{1/\alpha} 
\end{equation} 
if $n \gg 1$. The requirement that the worldline approximation is valid
implies that $q_* \alt 1/s$ and hence
\begin{equation}
n \alt \frac{2 \alpha}{3}\, s^{-\alpha} \,.
\label{eq:n-bound}
\end{equation} 
This condition gives the range of moments which are determined by the temporal sampling alone.
It is interesting to determine the interval of $x$ that largely determines these moments. 
If we use the approximation in Eq.~\eqref{eq:asymp-P} for
$P(x)$, the $n$-th moment is
\begin{equation}
\mu_n = \int_{-x_0}^\infty dx \, x^n\, P(x) \approx c_0 \int_0^\infty dx \, x^{n+b}\, {\rm e}^{-a x^c}\,.
\end{equation} 
The maximum of this integrand is at
\begin{equation}
x = x_n \approx \left(\frac{n}{a c}\right)^{1/c} \,,
\label{eq:xn-1}
\end{equation} 
if $n \gg b$. If we set $n$ equal to its upper limit in Eq.~\eqref{eq:n-bound}, then we obtain an estimate for the value of $x$ at which the transition from worldline 
to spacetime averaged behavior occurs:
\begin{equation}
x_* \approx s^{-3}  = (\tau/\ell)^3\,,
\end{equation} 
where we have used $c = \alpha/3$ and assumed that a factor of $a/2$ is of order one. As was discussed in Ref.~\cite{Huang:2016kmx}, $x \alt x_*$ is the 
range of validity of the worldline approximation. More generally  $x \approx x_*$ marks the transition in $P(x)$ from its worldline form to the spacetime averaged
form.

\subsection{The relative importance of different moments for the probability of large fluctuations}
\label{sec:relative}

We have seen that the lower moments, those which satisfy Eq.~\eqref{eq:n-bound}, determine the inner part of the probability distribution where $x \alt x_*$.
Similarly, we expect the higher moments to determine the region where $x \agt x_*$. We can make this statement more precise by noting that the form of
$P(x)$ for large $x$, given by either Eqs.~\eqref{eq:P2} or \eqref{eq:P3}, is also of the form of  Eq.~\eqref{eq:asymp-P} with $c = \alpha$. The argument leading
to Eq.~\eqref{eq:xn-1} still holds, and tells us that a given region of $P(x)$ for $x \agt x_*$ is determined by moments of order $n$, where
\begin{equation}
n \approx \alpha\, a\, x^\alpha\,.
\label{eq:n-of-x}
\end{equation} 
In this region, 
\begin{equation}
P(x) \propto  {\rm e}^{-a x^c} \approx {\rm e}^{-n/\alpha}\,.
\end{equation} 
This tells us that the value of $P(x)$ decreases exponentially with increasing $n$. 
The significance of this result lies in the fact that in a given application of the tail of probability distribution, we are typically interested in the probability of fluctuations
which might be large compared to the typical fluctuation, but for which $P(x)$ is still above some threshold of observability. Thus the regime of greatest physical
interest may be one where $x \gg 1$, but is not the $x \rightarrow \infty$ limit. 

Recall that the form of the tail of tail of $P(x)$ given by Eq.~\eqref{eq:P} was derived assuming that $S_N \approx T_N$ for large $N$. The numerical results
given in Figs.~\ref{fig:SNa},  \ref{fig:SNb}, and \ref{fig:SNc} indicate this happening in some cases. However, in other cases, especially the $\lambda = \alpha =1/2$
case in Fig.~\ref{fig:SNa}, $S_N$ is somewhat larger than $T_N$ for $N \alt 200$. Although the ratio $S_N/T_N$ is still decreasing, and might approach one
eventually, it is perhaps more important that  $S_N > T_N$ in many cases of physical interest. This implies that Eq.~\eqref{eq:P} is better viewed as a lower bound
on the actual probability distribution in these cases. For example, suppose that $S_N  \approx A\, T_N$ in some range of $N \gg 1$, where $A >1$ is a constant.
The corresponding range of $x$ is given by  Eq.~\eqref{eq:n-of-x}, given that $ n \approx N$ for $N \gg 1$. In this case, we can expect that Eq.~\eqref{eq:P}
underestimates the correct distribution in this range by a factor of $1/A$. Note that the overall constant in Eq.~\eqref{eq:P} is not determined by the arguments
presented in this paper. An alternative approach to computing  $P(x)$  is numerical diagonalization, which was used in  Ref.~\cite{SFF18} for the case of
time averaging. Work is currently in progress to extend this approach to the case of spacetime averaged operators. In principle, the diagonalization approach
is free of the ambiguities encountered in the present work.

\subsection{The case when the sampling length is large compared to the sampling time}
\label{sec:large-length}

In much of this paper, we have implicitly assumed that $s < 1$, or $\ell < \tau$. However, the opposite limit of large sampling length, $s >1$ is also of some interest. 
In this case, the diameter of the ball depicted in Fig.~\ref{fig:ball-and-shell}  is less than than the thickness of the shell. If $s \gg1$, the relevant illustration is the left-hand
panel of this figure, but with the ball entirely contained within the shell, as the case where the very small ball is partly outside the much thicker shell will give a small
contribution. In this case, the iteration will always be described by Eq.~\eqref{eq:largeq} with $C' = I(\infty)$, and the dominant contribution to the moments, $M_n$,
will be given by Eq.~\eqref{eq:Mnasy} with $C =1$ and $\mu =0$ for all $n$. However, the arguments in Sec.~\ref{sec:tail} that $S_N \approx T_N$ still require 
that $N \gg 1$. We may now write Eq.~\eqref{eq:P} for the asymptotic form of the tail of the probability distribution as
 \begin{equation}
 P(x) \approx K \, x^{1+2\lambda-4\alpha} \, {\rm e}^{-(x/B)^\alpha-\epsilon'\, (x/B)^\lambda} 
 \label{eq:P-largeL}
\end{equation} 
for $x \gg 1$, where the constant $K$ is found from Eqs.~\eqref{eq:Mnasy3} and \eqref{eq:B-def} to be
 \begin{equation}
 K =  \frac{3 C_{fg}^2}{32 \pi^4 \,\alpha^4}\,  2^{4-2(2+\lambda)/\alpha} \, B^{-2 (2+\lambda) +4\alpha} \,.
  \label{eq:K-largeL}
\end{equation} 
Unlike the more general case, here $K$ can be computed explicitly once the sampling functions are known.
Note that when $s >1$, Eq.~\eqref{eq:large-s} tells us that
 \begin{equation}
 B \approx \frac{B_1}{s^3} \,,
  \label{eq:B-largeL}
\end{equation} 
where $B_1$ is a constant. However, the factor of $C_{fg}^2$ is also a function of $s$.

Now we consider the special case where $\alpha = \lambda = 1/2$, where $\epsilon' = \epsilon = \sqrt{s} \gg 1$.  Now Eq.~\eqref{eq:P-largeL} becomes
 \begin{equation}
 P(x) \approx K  \, {\rm e}^{-\sqrt{s^4 \, x/B_1}}\,, 
 \label{eq:P-largeL2}
\end{equation} 
where
\begin{equation}
 K =  \frac{3 C_{fg}^2}{128 \pi^4\, B^3 } \,.
  \label{eq:K-largeL2}
\end{equation} 
Recall that $C_{fg} = C_f \, C_g$. Further assume that these constants have the values given in Sec.~\ref{sec:compact}: $C_f \approx 2.93$ and $C_g$
as given in Eq.~\eqref{eq:Cg}, and that $B_1 \approx 1$, as illustrated in Figs.~ \ref{fig:B1} and \ref{fig:B2}.  Finally, note that $s^4\, x = \ell^4\, T$, 
as $x = \tau^4 T$ and $T$ is the spacetime average of $:\dot{\varphi}^2:$. We may write the asymptotic probability distribution for $T$ as
\begin{equation}
P(T) \approx 1.5 \, s^6\, {\rm e}^{-\sqrt{\ell^4 \, T}} \,.
\label{eq:P-largeL3}
\end{equation} 
The factor of $s^6$ presumably reflects the fact that the limit $\tau \rightarrow 0$ for fixed $\ell$ is not meaningful. Equation~\eqref{eq:P-largeL3} is only valid 
when $T$ is sufficiently large that $P(T) \ll 1$.

\section{Summary and Discussion}
\label{sec:final}

In this paper, we have discussed the fluctuations of quantum stress tensor operators which have been averaged over finite intervals in both time and space. One can view
this spacetime averaging as modeling a measurement process which takes place in a finite spacetime region. Some averaging is essential for the operator
to have finite moments and hence a meaningful probability distribution. In the two spacetime dimensional CFT models treated in Sec.~\ref{sec:CFT}, the averaging
could be performed in time alone or equivalently in space alone, or it could be both in time and in space. In the latter case, the probability of large fluctuations is suppressed
compared to the cases of  time averaging alone or space averaging alone. In the four-dimensional models treated in the remainder of the paper, time averaging
is essential. Space averaging alone would not suppress an infinite contribution to the moments coming from pairs of modes associated with equal and opposite
momenta. For the same reason, there are no quantum inequalities for purely spatial averaging in four dimensions~\cite{FHR2002}.

 We have developed a formalism for treating the effects of both space and time averaging. In both cases, we assume that the averaging intervals are finite, 
and hence are described by compactly supported functions of time and of space. We have assumed that there is an inertial frame (a laboratory frame) in which
the space time averaging can be written as a product of a compactly supported function of time and of a spherically symmetric, compactly supported function of space.
The Fourier transform of the former is taken to be asymptotically proportional to ${\rm e}^{-|\omega\tau|^\alpha}$, and that of the latter to be asymptotically proportional to 
${\rm e}^{- (\ell k)^\lambda}$, where $0 < \lambda \leq \alpha <1$, $\tau$ is the characteristic width of the time
sampling functions, and $\ell$ is that of the spatial sampling function.

We developed an iteration procedure which generalizes that used in Ref.~\cite{FF2015} for the worldline case, and used this procedure to infer the rate of growth
of the moments and the asymptotic form of the stress tensor probability distribution, $P(x)$. Here $x = \tau^4\, T$ is a dimensionless measure of the averaged operator
$T$. We found that if the spatial sampling scale is small compared to the temporal scale, $\ell \ll \tau$, then there is finite range in $x$ which reproduces the worldline
result that $P(x) \sim c_0\, x^b \, {\rm e}^{-a x^c}$ with $c= \alpha/3$. However, as $x$ increases further, there is a transition region, beyond which $P(x)$ again
takes the same functional form, but with different values of the constants. We argued that the transition occurs at a value $x_*\approx (\tau/\ell)^3$. In particular, as 
$x \rightarrow \infty$, we find $c \approx \alpha$. This larger value of $c$ compared to the worldline case reflects the role of spatial averaging in suppressing large fluctuations. 
Nonetheless, with $\alpha <1$, the probability distribution
still falls more slowly than an exponential function. This allows the possibility of large physical effects from the fluctuations of space and time averaged stress tensors.

A typical vacuum fluctuation of the energy density or other stress tensor components is described by the root mean square value, $x_{\rm rms}$, which is expected to
be of order of one in $\tau =1$ units.  In the case where the switching function corresponds to $\alpha = 1/2$, then the probability density for a large fluctuation of the space and
time averaged energy density is roughly proportional to ${\rm e}^{-\sqrt{x}}$. A large fluctuation with $x = 100\, x_{\rm rms}$ is expected to be suppressed by a
factor of order ${\rm e}^{-10} = 4.5 \times 10^{-5}$ compared to a typical fluctuation. By comparison, in a process described by a Gaussian distribution, such a large
fluctuation would be suppressed by a factor of ${\rm e}^{-10^4}$.  

The results in this paper potentially have applications to several areas of physics, including phonon fluctuations in condensed matter physics, quantum tunneling, 
 density fluctuations in the early universe~\cite{WKF07,Ford:2010wd}, and the small scale structure of spacetime~\cite{CMP11,CMP18}.

\begin{acknowledgments} 
We would like to thank Peter Wu for useful discussions and comments on the manuscript.
This work was supported in part  by the National Science Foundation under Grant PHY-1607118, and by Scheme~4 Grant Ref.~41455 from the London Mathematical Society.
\end{acknowledgments}

 \appendix
 \section{Construction of an explicit choice of $\hat{f}(\omega)$ and of $\hat{g}(k)$}
 \label{sec:construct}
 
 In this appendix, we describe the construction of the specific forms of  $\hat{f}(\omega)$ and of $\hat{g}(k)$ which are used in the numerical computations
 reported in this paper. We first follow the procedure given in Sec.~IIB of \cite{FF2015}, and define the compactly supported function $H(t)$ by
 \begin{equation}
H(t) = \begin{cases} 
\frac{2}{\pi} (1-4t^2)^{-3/2} {\rm e}^{-1/(1-4t^2)}
& |t|<\frac{1}{2} \\ 0 & |t|\ge \frac{1}{2}\end{cases}\,.
\label{eq:H}
\end{equation} 
 Its Fourier transform is
 \begin{equation}
 \hat{H}(\omega) = \int_{-\infty}^\infty dt \, {\rm e}^{-i \omega t} \, H(t) = 2 \int_0^{1/2} dt \, \cos(\omega t) \, H(t) \,.
 \label{eq:Hhat}
 \end{equation} 
 
 In numerical computations, we avoid the singularity in the $(1-4t^2)^{-3/2}$ factor by setting the upper limit of integration to $0.499$.
 We define
 \begin{equation}
  \hat{L}(\omega) =  \hat{H}^2(\omega) +\frac{1}{2}[  \hat{H}^2(\omega+\pi) +   \hat{H}^2(\omega-\pi)] \,.
  \label{eq:Lhat}
 \end{equation} 
 Here the appearance of the square of $\hat{H}$ ensures that $ \hat{L}(\omega) \geq 0$, and the sum of three terms in Eq.~\eqref{eq:Lhat} is used
 to suppress oscillations as a function of $\omega$.
 Next let
 \begin{equation}
\hat{h}(\omega) = \frac{ \hat{L}(\omega)}{ \hat{L}(0)}\,.
\label{eq:hhat}
\end{equation} 
Now $\hat{h}(0) = 1$, so that $\hat{h}(\omega)$ is the Fourier transform of a normalized sampling function. Its asymptotic form for large arguments
is
\begin{equation}
\hat{h}_{\rm asy}(\omega) \approx 2.9324 {\rm e}^{-\sqrt{2 \omega}}\,.
\end{equation} 

It is useful to have a simple approximate form of $\hat{h}(\omega)$ for smaller values of its argument for use in numerical calculations. This can be
found by fitting a polynomial to numerically computed values for $\hat{h}(\omega)$, giving an approximation
\begin{eqnarray}
\hat{h}_{\rm fit}(\omega) &=& 1. - 0.0378271\, \omega^2 - 0.000429218\, \omega^3 + 0.000875262\, \omega^4  \\ \nonumber
 &-&  0.0000485667\, \omega^5 - 2.61062 \times 10^{-6} \, \omega^6 + 1.9601\times10^{-7}\, \omega^7\,, \qquad \omega < 9.92\,,
\end{eqnarray} 
and \,
 \begin{equation}
\hat{h}_{\rm fit}(\omega) = \hat{h}_{\rm asy}(\omega) \,, \qquad \omega \geq 9.92\,.
\end{equation} 
The value of $\omega = 9.92$, at which the polynomial  is matched to $\hat{h}_{\rm asy}(\omega)$ is selected to make the match as smooth as possible. 
The  function $\hat{h}(\omega)$, which is computed  using Eqs.~\eqref{eq:H}-\eqref{eq:hhat}, and its approximate form, $\hat{h}_{\rm fit}(\omega)$, are plotted
in Fig.~\ref{fig:hhat}. The matching region is illustrated in Fig.~\ref{fig:hfit}. 
 \begin{figure}[htbp]
	\centering
		\includegraphics[scale=0.3]{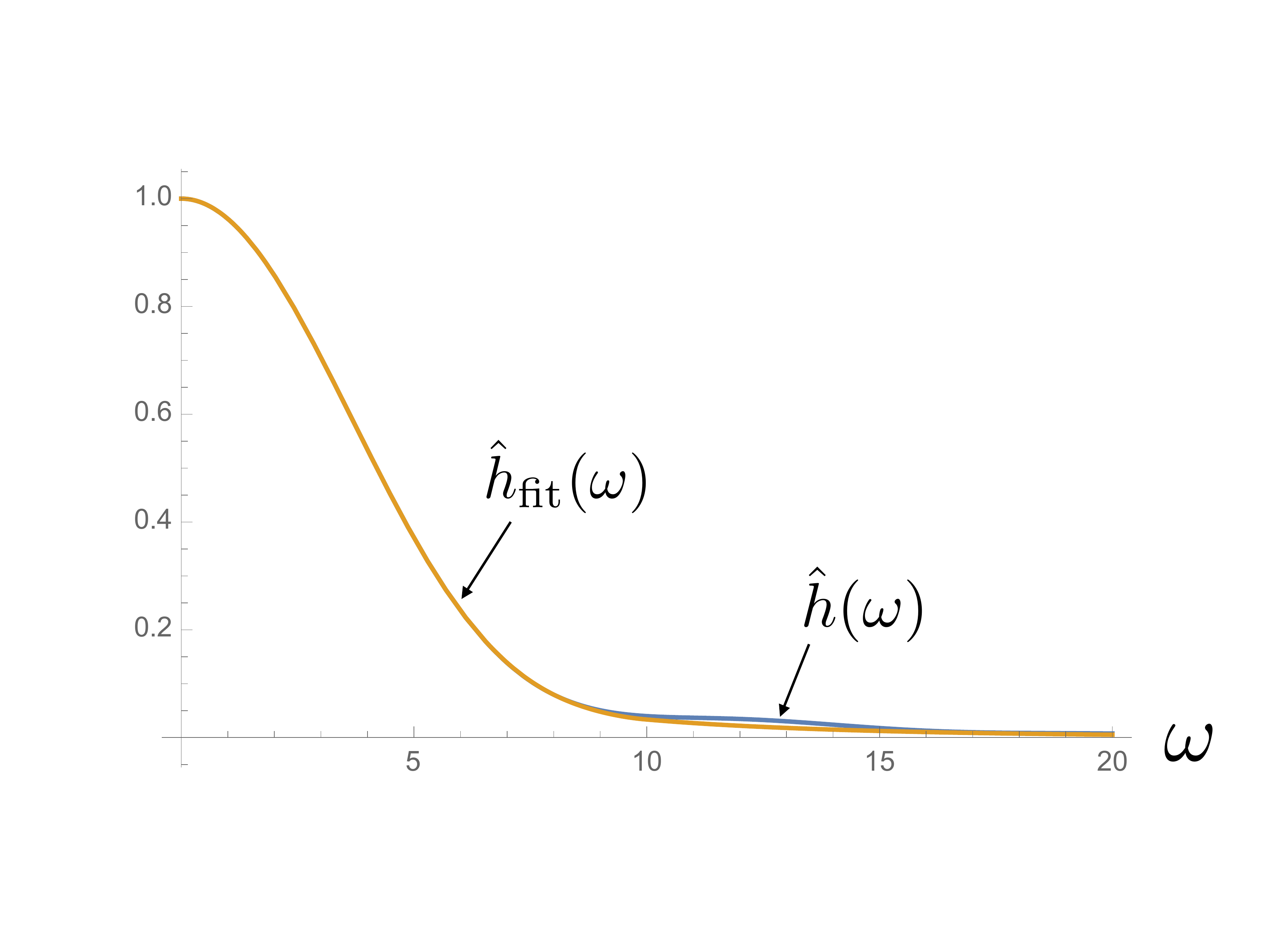}
		\caption{The functions $\hat{h}(\omega)$ and $\hat{h}_{\rm fit}(\omega)$  are illustrated. They are essentially identical on the scale shown,
		apart from a small local maximum in $\hat{h}(\omega)$  near $\omega =13$.}
			\label{fig:hhat}
\end{figure}
\begin{figure}[htbp]
	\centering
		\includegraphics[scale=0.3]{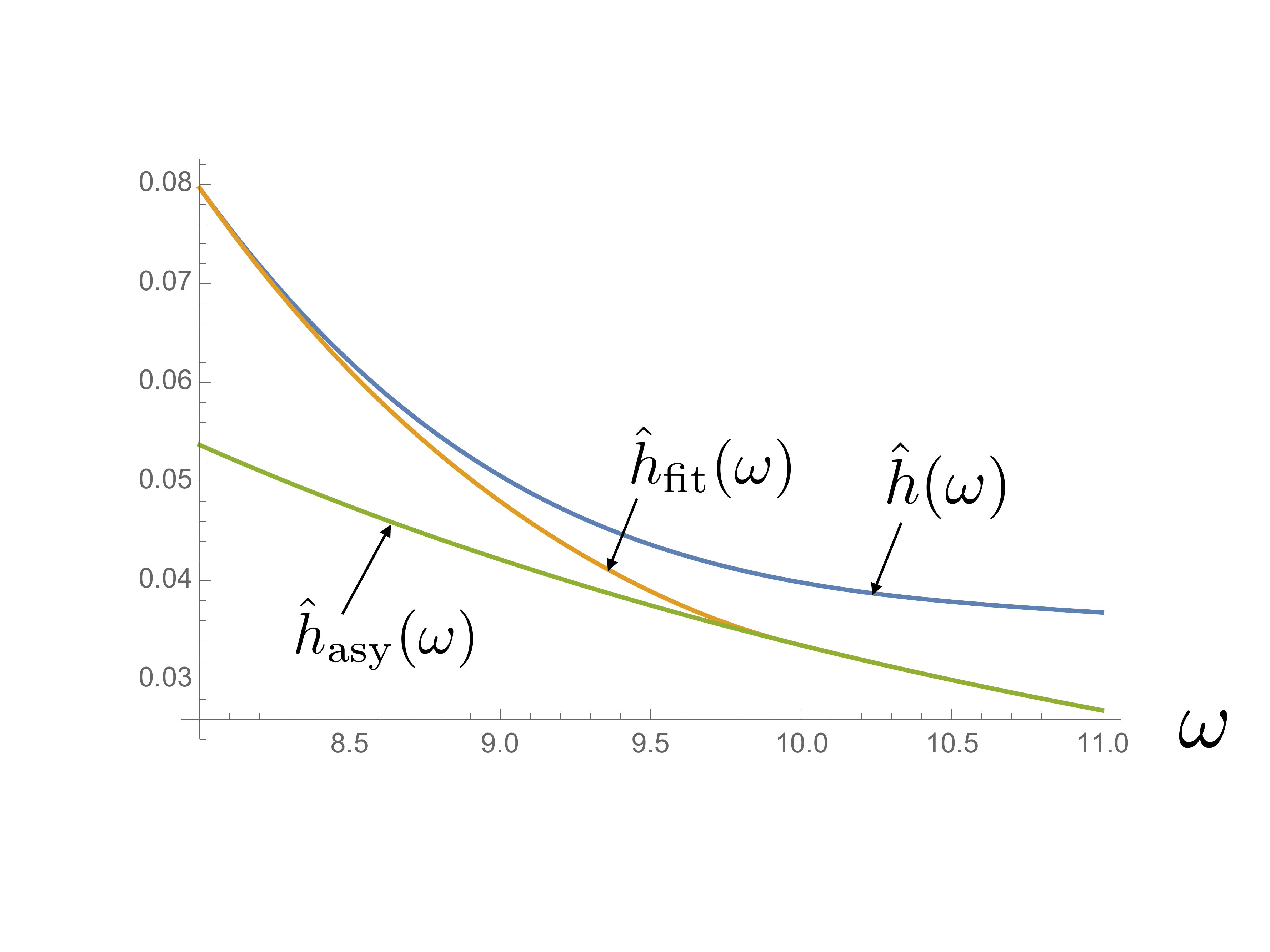}
		\caption{Here $\hat{h}(\omega)$ , its asymptotic form $\hat{h}_{\rm asy}(\omega)$ , and $\hat{h}_{\rm fit}(\omega)$  are illustrated
		near the matching region. The fitting function, $\hat{h}_{\rm fit}(\omega)$, has been chosen to interpolate as smoothly as possible
		between $\hat{h}(\omega)$ and  $\hat{h}_{\rm fit}(\omega)$. }
			\label{fig:hfit}
\end{figure}
For $\omega \leq 8$, the fractional error in the fit, $|\hat{h}_{\rm fit}(\omega) - \hat{h}(\omega)|/\hat{h}(\omega)$, is less than about $0.003$. For larger
values of $\omega$, $\hat{h}_{\rm fit}(\omega)$  was selected to approximate $\hat{h}_{\rm asy}(\omega)$. However, $\hat{h}(\omega)$ undergoes 
some oscillations before approaching $\hat{h}_{\rm asy}(\omega)$, as may be seen in Fig.~\ref{fig:hfit}. 

 We may use this choice of $\hat{h}_{\rm fit}(\omega)$ to define a temporal sampling function by $\hat{f}_{\rm fit}(\omega) = \hat{h}_{\rm fit}(\omega/2)$,
 and a spatial function, using Eq.~\eqref{eq:ghat}, by 
 \begin{equation}
\hat{g}_{\rm fit}(k) = \frac{\hat{h}'_{\rm fit}(k \ell)}{k \ell \, \hat{h}''_{\rm fit}(0)}\,.
\end{equation} 
The latter function is illustrated in Fig.~\ref{fig:gfit}.
\begin{figure}[htbp]
	\centering
		\includegraphics[scale=0.3]{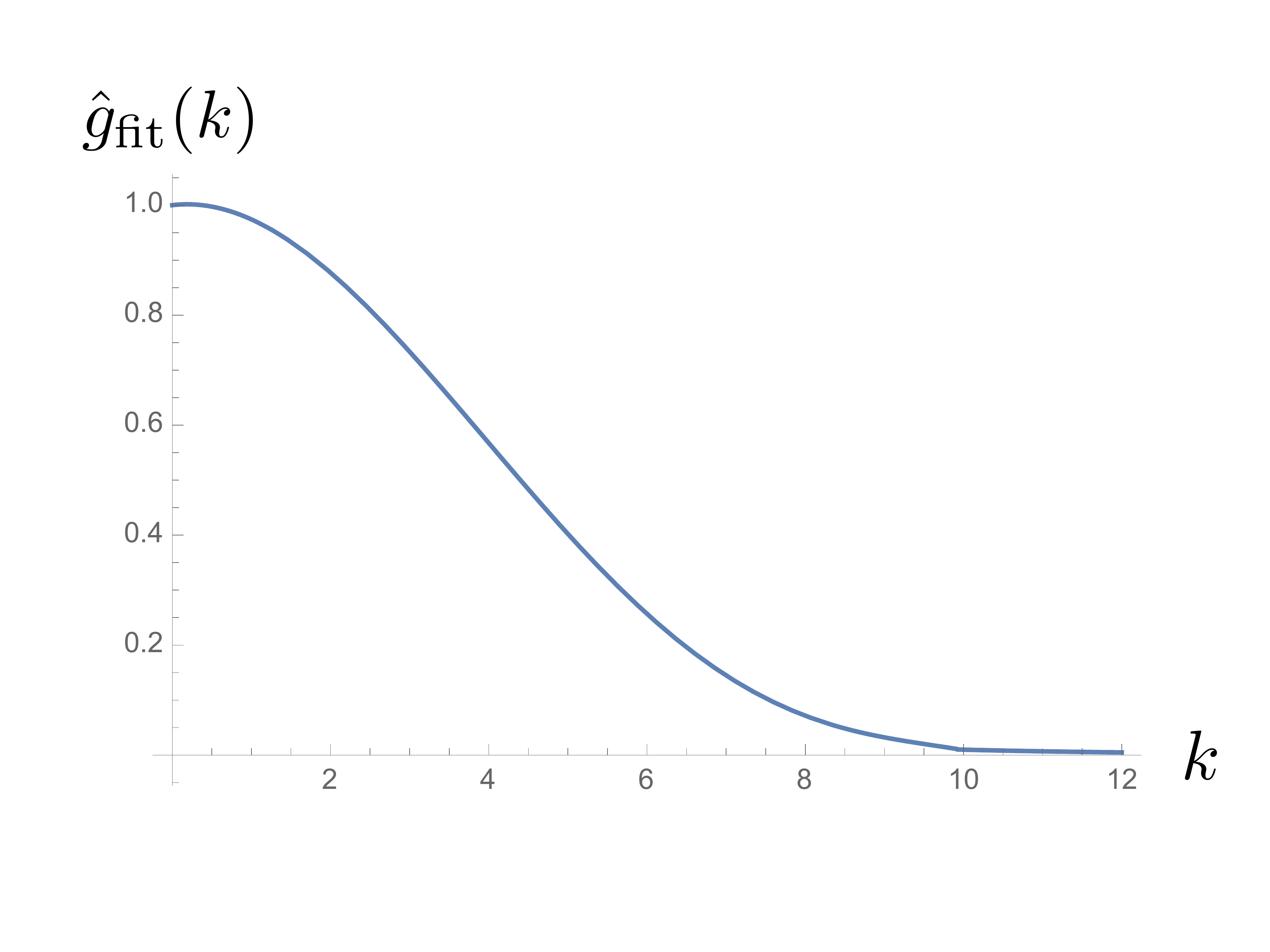}
		\caption{Here $\hat{g}_{\rm fit}(k)$ is plotted. It is the Fourier transform of the spherically symmetric spatial sampling function derived from
		$\hat{h}_{\rm fit}(\omega)$.}
			\label{fig:gfit}
\end{figure}

 \section{Fulks' generalization of Laplace's method}
 \label{sec:Fulks}
 
 The classical method of Laplace for asymptotic evaluation of integrals applies to 
 expressions of the form
 \begin{equation}
 I_h = \int_a^b f(t) e^{-h\phi(t)}\,dt
 \end{equation}
 as the parameter $h$ becomes large. As is well-known, the asymptotic behavior of $I_h$ is determined by the properties of $f$ and $h$ near the global minimum of 
 $\phi$ on the integration range, as well as the character of this minimum -- in particular, whether it is a stationary or nonstationary minimum, and whether it is located 
 at an endpoint or in the interior.  In this section we discuss more the general problem in which the integral 
 \begin{equation}\label{eq:Ihk}
 I_{h,k} = \int_a^b f(t) e^{-h\phi(t) + k\psi(t)}\,dt
 \end{equation}
 depends on two large parameters, both of which are becoming large, but at different rates. To be specific, we will assume that $k$ grows more slowly than $h$, to the 
 extent that $k=o(h)$ as $h\to\infty$.

 Fulks~\cite{Fulks:1951} considered integrals of the form~\eqref{eq:Ihk}
 where $-\infty<a<b\le\infty$, in which $\phi$ has a single global minimum at $a$. As he remarks, it is easy to generalize to the situation in which 
 $-\infty\le a<b\le\infty$ and $\phi$ has a single interior global minimum at $t_*\in (a,b)$, and we will state the results for this case.
 \begin{theorem}
 Suppose that
 \begin{itemize}
 	\item $\phi$ has a single global minimum at $t_*\in (a,b)$, near which it is $C^3$, and is nonincreasing in $[a,t_*]$ and nondecreasing in $[t_*,b]$
 	\item $\psi$ is $C^2$ near $t_*$, and continuous on $[a,b]$
 	\item $f$ is continuous at $t_*$ and $f(t_*)\neq 0$; it is also locally integrable and the integral $I_{h,k}$ exists for sufficiently large $h, k$.
 \end{itemize}
 Then if $h,k\to\infty$ with $k=o(h)$, the asymptotics may be given as follows:
 \begin{enumerate}
 	\item if $k=o(\sqrt{h})$ or $\psi'(t_*)=0$ then
 	\begin{equation}\label{eq:Fulks1}
 	I_{h,k}\sim f(t_*)\sqrt{\frac{2\pi}{h\phi''(t_*)}}\exp\left(-h\phi(t_*)+k\psi(t_*)\right);
 	\end{equation}
 	\item if $0<\liminf k/\sqrt{h}$ and $\limsup k/\sqrt{h}<\infty$ then
 	\begin{equation}\label{eq:Fulks2}
 	I_{h,k}\sim f(t_*)\sqrt{\frac{2\pi}{h\phi''(t_*)}}\exp\left(-h\phi(t_*)+k\psi(t_*)+\frac{\psi'(t_*)^2 k^2}{2\phi''(t_*)h}\right);
 	\end{equation}
 	\item if $\sqrt{h}=o(k)$ and $\psi'(t_*)\neq 0$ then
 	\begin{equation}\label{eq:Fulks3}
 	I_{h,k}\sim f(t_*)\sqrt{\frac{2\pi}{h\phi''(t_*)}}\exp\left(-h\phi(\tau)+k\psi(\tau)\right),
 	\end{equation}
 	where $\tau$ is determined by $h\phi'(\tau)=k\psi'(\tau)$ and is the position of the global minimum of $-h\phi(t)+k\psi(t)$. If, more specifically, $k=o(h^{2/3})$, one has 
 	\begin{equation}\label{eq:Fulks4}
 	I_{h,k}\sim f(t_*)\sqrt{\frac{2\pi}{h\phi''(t_*)}}\exp\left(-h\phi(t_*)+k\psi(t_*)+\frac{\psi'(t_*)^2 k^2}{2\phi''(t_*)h}\right).
 	\end{equation}
 	(Other special cases can be given, for different conditions on the growth of $k$ relative to $h$ and suitable higher regularity of $\phi$ and $\psi$. In general we can solve for $\tau$ as a series in $k/h$ and 
 	the exponent will contain terms proportional to $h(k/h)^a$ for all $a\in\NN_0$ so that $h(k/h)^a$ is constant or growing as $h\to\infty$).
 \end{enumerate}
\end{theorem}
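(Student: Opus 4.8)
The plan is to reduce everything to the classical Laplace method applied to the single combined phase $\Phi(t) := \phi(t) - (k/h)\psi(t)$, so that the integrand becomes $f(t)\,e^{-h\Phi(t)}$, and then to track carefully how the relevant quantities depend on the small parameter $s := k/h$, which tends to zero by hypothesis. First I would establish a localization estimate: using the monotonicity of $\phi$ on either side of $t_*$ together with the continuity of $\psi$ and the assumption $k=o(h)$, the contribution to $I_{h,k}$ from the exterior of any fixed neighborhood of $t_*$ is exponentially suppressed relative to the contribution from the neighborhood (the existence of $I_{h,k}$ for large $h,k$ controls any unbounded tails). The point is that for small enough $s$ the perturbation $-s\psi$ cannot destroy the strict-minimum structure of $\phi$, so the global minimizer $\tau$ of $\Phi$ stays in a controlled neighborhood of $t_*$; solving $\Phi'(\tau)=0$, i.e. $h\phi'(\tau)=k\psi'(\tau)$, and using $\phi'(t_*)=0$, $\phi''(t_*)>0$ gives the shift $\tau-t_* = s\,\psi'(t_*)/\phi''(t_*) + O(s^2)$.

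Next I would Taylor expand the exponent about $t_*$. Writing $t=t_*+u$ and keeping terms through quadratic order, the exponent reads $-h\phi(t_*)+k\psi(t_*) -\tfrac12 h\phi''(t_*)u^2 + k\psi'(t_*)u + R(u)$, where $R$ collects the cubic-and-higher remainder (the term $\tfrac{k}{2}\psi''(t_*)u^2$ being subdominant since $k=o(h)$). Completing the square in the quadratic-plus-linear part recentres the Gaussian at $u_0 = k\psi'(t_*)/(h\phi''(t_*))$ and produces precisely the correction $\frac{\psi'(t_*)^2 k^2}{2\phi''(t_*)h}$ in the exponent; the fluctuation integral then yields the standard factor $\sqrt{2\pi/(h\phi''(t_*))}$, and the prefactor $f(\tau)\to f(t_*)$ by continuity. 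This already gives the explicit closed forms \eqref{eq:Fulks2} and \eqref{eq:Fulks4}, provided $R$ can be discarded.

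The case distinctions then reduce to sizing $R$ against $1$. The leading piece of $R$ is the cubic term, whose value at the shifted centre $u_0\sim s=k/h$ is of order $h\,(k/h)^3 = k^3/h^2$, while the quadratic correction already extracted is of order $k^2/h$. Hence: if $k=o(\sqrt h)$ the $k^2/h$ correction itself vanishes and one recovers the bare formula \eqref{eq:Fulks1}; if $k\asymp\sqrt h$ it is $O(1)$ and must be kept, giving \eqref{eq:Fulks2}; and if $k=o(h^{2/3})$ the cubic correction $k^3/h^2\to 0$ so the quadratic approximation is legitimate and \eqref{eq:Fulks4} holds. For the remaining regime $\sqrt h=o(k)$ with $\psi'(t_*)\neq0$, the higher-order terms in $R$ are no longer negligible when expanded about $t_*$, so instead I would expand about the exact minimizer $\tau$: there $\Phi''(\tau)=\phi''(\tau)-s\psi''(\tau)=\phi''(t_*)+O(s)\to\phi''(t_*)$, the linear term vanishes by construction, and classical Laplace about $\tau$ yields \eqref{eq:Fulks3} with exponent $-h\phi(\tau)+k\psi(\tau)$. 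The final parenthetical claim then follows by solving $\Phi'(\tau)=0$ perturbatively as a series $\tau-t_*=\sum_{a\ge1}c_a s^a$ and substituting back, which generates exponent corrections at each order $h s^a = h(k/h)^a$; only those with $h(k/h)^a$ constant or growing survive in a given regime.

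The hardest part will be making the localization and remainder control genuinely uniform as both $h$ and $k$ tend to infinity, rather than for fixed $s$. In particular one must verify that the Gaussian fluctuation scale $u\sim h^{-1/2}$ dominates the displacement $u_0\sim k/h$ exactly in the regimes where the quadratic approximation is asserted, and bound both the tail of the $u$-integral and the higher Taylor remainder uniformly in $s$; this is where the $C^3$ hypothesis on $\phi$, the $C^2$ hypothesis on $\psi$, and the continuity and local integrability conditions on $f$ are consumed.
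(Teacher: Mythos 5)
Your proposal is sound in outline, but it takes a genuinely different route from the paper: the paper does not prove this theorem analytically at all. Its proof consists of citing Fulks' 1951 paper --- the statements are ``lightly adapted from Theorems 1--4 and the Corollary'' of that reference --- and the only part argued directly is the final parenthetical remark, which is obtained by Taylor-expanding the inverse of $\eta(t)=\phi'(t)/\psi'(t)$ and using $\tau=\eta^{-1}(k/h)$. You, by contrast, reconstruct the underlying analysis from scratch: absorbing $k\psi$ into a combined phase $\Phi=\phi-(k/h)\psi$, localizing near $t_*$, completing the square to recentre the Gaussian at $u_0=k\psi'(t_*)/(h\phi''(t_*))$, and reading off the case boundaries by comparing the induced corrections $k^2/h$ (threshold $k\asymp\sqrt h$) and $k^3/h^2$ (threshold $k=o(h^{2/3})$) with $O(1)$; for the regime $\sqrt h = o(k)$ you expand about the exact minimizer $\tau$ instead, and your perturbative solution of $\Phi'(\tau)=0$ as a series in $k/h$ is essentially identical to the paper's own argument for the parenthetical remark. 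This is, in substance, the proof strategy behind Fulks' theorems, and your explanation of where the exponents $1/2$ and $2/3$ come from is more illuminating than a bare citation. What the citation buys, and what remains the one real gap in your sketch, is the uniform control you yourself flag at the end: in particular, on a possibly unbounded interval the mere existence of $I_{h,k}$ for large $h,k$ does not by itself yield exponential suppression of the tail uniformly as $k/h\to 0$ (one needs the kind of growth/convergence hypotheses and uniform estimates that Fulks formulates precisely), and the remainder bounds must hold uniformly over the moving Gaussian window rather than for fixed $s=k/h$. If you supplied those uniform estimates, your argument would stand as a complete, self-contained alternative to the paper's proof-by-reference.
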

\begin{proof}
	Apart from the parenthetic comment, all the statements are lightly adapted from Theorems~1--4 and the Corollary of~\cite{Fulks:1951}, noting the comments that follow the Corollary. The comment is evident 
	by expanding the inverse function to $\eta(t)=\phi'(t)/\psi'(t)$ using Taylor's theorem with remainder, noting that $\tau=\eta^{-1}(k/h)$.
\end{proof}

 As an example, we consider the integrals
 \begin{equation}
 I_N = \int_0^\infty dq\, q^{N-1} e^{-q^\alpha-\epsilon q^\lambda},
 \label{eq:IN}
 \end{equation}
 where $0<\lambda<\alpha<1$, defined in Eq.~\eqref{eq:INtext}. [For reference, the case $\lambda=\alpha$ can be evaluated exactly to give $I_N = \alpha^{-1} \Gamma(N/\alpha) (1+\epsilon)^{-N/\alpha}$.]
 Changing variables to $v=q^\alpha$ gives
 \begin{equation}
 I_N = \alpha^{-1}\int_0^\infty dv\, v^{N/\alpha -1} e^{-v-\epsilon v^{\beta}} ,
 \end{equation}
 in which the integral is known as Fax\'en's integral,  $I_N=\alpha^{-1}\mathrm{Fi}(\beta,N/\alpha;-\epsilon)$ in the notation of  \cite[\S 9.4]{Olver:1974}. 
 Setting $\Omega=N/\alpha - 1$ and $\beta=\lambda/\alpha$, and making the change of variables $v=\Omega t$, we have
 \begin{equation}
 I_N = \frac{\Omega^{\Omega+1}}{\alpha}\int_0^\infty dt\, e^{\Omega(\log t-t)-\epsilon \Omega^{\beta}t^{\beta}},
 \end{equation}
 in which the integral is of Fulks' form with $h=\Omega$, $k=\Omega^\beta$, $\phi(t)=t-\log t$, $\psi(t)=-\epsilon t^\beta$, $f\equiv 1$. Noting that 
 \begin{equation}
 \phi'(t)=1-t^{-1},\quad \phi''(t)=t^{-2}
 \end{equation}
 we see that $\phi$ has a single global minimum at $t_*=1$, to the left of which it is decreasing and to the right of which it is increasing. 
 Note that $\phi(t_*)=\phi''(t_*)=1$, $\psi(t_*)=-\epsilon$, $\psi'(t_*)=-\beta\epsilon$. There are several cases, 
 depending on the value of $\beta=\lambda/\alpha$:
 \begin{itemize}
 	\item if $\lambda<\alpha/2$, then $k=o(\sqrt{h})$ and by~\eqref{eq:Fulks1},
 	\begin{equation}
 	I_N \sim  \frac{\Omega^{\Omega+1/2}e^{-\Omega-\epsilon\Omega^{\lambda/\alpha}}\sqrt{2\pi}}{\alpha}
 	\sim \alpha^{-1}\Gamma(N/\alpha) e^{-\epsilon(N/\alpha-1)^{\lambda/\alpha}};
	\label{eq:Lt-half}
 	\end{equation}
 	\item if $\lambda=\alpha/2$, then $k=\sqrt{h}$ and by~\eqref{eq:Fulks2},
 	\begin{equation}
 	I_N \sim  \frac{\Omega^{\Omega+1/2}e^{-\Omega-\epsilon\Omega^{\lambda/\alpha}+\epsilon^2/8}\sqrt{2\pi}}{\alpha}
 	\sim \alpha^{-1}e^{\epsilon^2/8}\Gamma(N/\alpha) e^{-\epsilon\sqrt{N/\alpha-1}};
	\label{eq:eq-half}
 	\end{equation}
 	\item if $\alpha/2<\lambda<2\alpha/3$ then $\sqrt{h}=o(k)$ and $k=o(h^{2/3})$, and by~\eqref{eq:Fulks4},
 	\begin{equation}
 	I_N \sim  \frac{\Omega^{\Omega+1/2}e^{-\Omega-\epsilon\Omega^{\lambda/\alpha} +(\epsilon\beta)^2\Omega^{2\beta-1}/2}\sqrt{2\pi}}{\alpha}
 	\sim \alpha^{-1} \Gamma(N/\alpha) e^{-\epsilon(N/\alpha-1)^{\lambda/\alpha}
 		+(\epsilon\beta)^2(N/\alpha-1)^{2\beta-1}/2}
		\label{eq:Lt-two-thirds}
 	\end{equation} 
 	(so this formula also holds for $\lambda=\alpha/2$);
 	\item if $\alpha/2<\lambda<\alpha<1$, with no further information, then by~\eqref{eq:Fulks3},
 	\begin{equation}
 	I_N \sim  \frac{\Omega^{\Omega+1/2}\tau^\Omega e^{-\Omega\tau-\epsilon(\Omega\tau)^{\lambda/\alpha}}\sqrt{2\pi}}{\alpha} \sim
 	\alpha^{-1}\Gamma(N/\alpha)\tau^{N/\alpha-1} e^{-(N/\alpha-1)(\tau-1)-\epsilon((N/\alpha-1)\tau)^{\lambda/\alpha}},
 	\end{equation}
 	where $\tau$ is determined by $\tau+\epsilon\beta\Omega^{\beta-1}\tau^\beta=1$, $\beta=\lambda/\alpha$.
 	More could be said given a tighter upper bound on $\lambda/\alpha$ and indeed the formula~\eqref{eq:Lt-two-thirds} given for the range $\lambda\in (\alpha/2,2\alpha/3)$ is a special case.
 \end{itemize}

 As a check on the result for $\lambda=\alpha/2$, we note that $I_N$ can be evaluated in terms of Kummer functions in this case.
 Changing variables to $v=q^{\alpha/2}$, one has
 \begin{equation}
 I_N = \frac{2}{\alpha} \int_0^\infty dv\, v^{2N/\alpha-1}e^{-v^2-\epsilon v},
 \end{equation}
 which evaluates by~\cite[3.462.1]{GR2000}
 \begin{align}
 I_N &= \frac{1}{\alpha}2^{1-N/\alpha}\Gamma(2N/\alpha)e^{\epsilon^2/8}D_{-2N/\alpha}(\epsilon/\sqrt{2})  \\
 &=
\frac{2}{\alpha}2^{-2N/\alpha}\Gamma(2N/\alpha) U\left(\frac{N}{\alpha},\frac{1}{2}, \frac{\epsilon^2}{4}\right) \\
&= \frac{\Gamma(N/\alpha) \Gamma(N/\alpha+1/2)}{\alpha\sqrt{\pi}}   U\left(\frac{N}{\alpha},\frac{1}{2}, \frac{\epsilon^2}{4}\right). 
\end{align}
where $D_{\nu}(z)$ is a parabolic cylinder function and $U$ is Kummer's function (see~\cite[\S 12.1 \& 12.7.14]{DLMF} for the relation between these special functions).
The last step uses the duplication formula for $\Gamma$-functions.  
 
Asymptotic expansions of the Kummer function $U$ for large parameters are known -- see~\cite[\S13]{DLMF} and~\cite{Temme:2013} -- and give
 \begin{equation}
 I_N(\epsilon) \sim  \frac{\Gamma(N/\alpha+1/2)e^{\epsilon^2/8}}{\alpha \sqrt{N/\alpha-1}}e^{-\epsilon\sqrt{N/\alpha-1}}\sim  \frac{\Gamma(N/\alpha)e^{\epsilon^2/8}}{\alpha }e^{-\epsilon\sqrt{N/\alpha-1}}
 \end{equation}
 in agreement with our results above.

\end{document}